\theoremstyle{plain}
\newtheorem{theorem}{Theorem}[section]
\newtheorem{lemma}{Lemma}[section]
\theoremstyle{remark}
\def\tht{\theta}
\def\g{\gamma}
\def\G{\Gamma}
\def\l{\lambda}
\def\p{\partial}
\def\E{\mbox{\rm e}}
\def\b{\beta}
\def\L{\Lambda}
\def\Odr{\mathcal{O}}
\def\H{W_2}
\def\Hloc{W_{2,\mathrm{loc}}}
\def\di{\,\mathrm{d}}
\def\I{\mathrm{I}}
\def\iu{\mathrm{i}}
\DeclareMathOperator{\RE}{Re} \DeclareMathOperator{\IM}{Im}
\DeclareMathOperator{\supp}{supp}
\numberwithin{equation}{section}
\begin{document}

\allowdisplaybreaks


\begin{center}
\textbf{\Large Tunneling resonances in systems without \\ a classical trapping}

\bigskip

{\large\textbf{D.~Borisov$^a$, P.~Exner$^b$,
A.~Golovina$^c$}}

\end{center}

\begin{itemize}
\item[a)] \emph{Faculty of Physics and Mathematics, Bashkir
State Pedagogical University, October rev. st.~3a, Ufa, 450000, and Institute of Mathematics of Ufa Scientific Center of RAS, Chernyshevskogo str. 112, 450008, Ufa, Russia}\\
  \emph{E-mail:}\ \texttt{borisovdi@yandex.ru},
\emph{URL:} \texttt{http://borisovdi.narod.ru/}

 \item[b)] \emph{Nuclear Physics Institute, Academy of Sciences, 25068 \v{R}e\v{z} near Prague, and Doppler Institute, Czech Technical University, B\v{r}ehov\'{a}~7, 11519~Prague, Czech Republic}\\
      \emph{E-mail:}\ \texttt{exner@ujf.cas.cz}, \emph{URL:}\  \texttt{http://gemma.ujf.cas.cz/\~{}exner/}

 \item[c)] \emph{Faculty of Physics and Mathematics, Bashkir
State Pedagogical University, October rev. st.~3a, Ufa, 450000, Russia}\\
 \emph{E-mail:}\ \texttt{nastya\_gm@mail.ru}
\end{itemize}

\medskip

\begin{quote}

\textbf{Abstract:} In this paper we analyze a free quantum particle in a straight Dirichlet waveguide which has at its axis two Dirichlet barriers of lengths $\ell_\pm$ separated by a window of length $2a$. It is known that if the barriers are semiinfinite, i.e. we have two adjacent waveguides coupled laterally through the boundary window, the system has for any $a>0$ a finite number of eigenvalues below the essential spectrum threshold. Here we demonstrate that for large but finite $\ell_\pm$ the system has resonances which converge to the said eigenvalues as $\ell_\pm\to\infty$, and derive the leading term in the corresponding asymptotic expansion.

\bigskip

Keywords: Dirichlet Laplacian, waveguide, tunneling resonances

\end{quote}

\section{Introduction}

Tunneling resonances belong to the number of most traditional and important topics in quantum mechanics. The fact that a quantum particle can leave a region in which it is classically confined by tunneling through a potential barrier was noted in early days of the theory \cite{G} and the discovery of artificial radioactivity several years later demonstrated that it can give rise to a resonance process. Over the years, the tunneling resonance effect became a subject of countless studies, among which one can find many examples of a rigorous analysis --- see, for instance, \cite{H, AH, CDS, CDKS, E}.

Potential barrier tunneling is, however, only one of many instances where quan\-tum mechanics confronts us with effects which defy out intuition based on an everyday macroscopic experience. Another one is represented by purely geometric binding in situations where there is no classical trapping: examples are bound states in bent \cite{ES} or laterally coupled \cite{ESTV} waveguides. In this paper we will deal with a system akin to the last named example, let us therefore recall it in more detail. If we have two adjacent waveguides, i.e. straight planar strips in which the particle dynamics is described by (a multiple of the) Dirichlet Laplacian, the spectrum is absolutely continuous and can be found trivially by separation of variables. It is sufficient, however, to connect the two strips by a opening window in the common boundary, and the spectrum changes: a finite number of isolated eigenvalues appears under the continuum threshold despite the fact that the phase space volume corresponding to classically trapped trajectories is of measure zero. In addition, the system exhibits also resonance associated with higher transverse modes \cite{ESTV}.

In this paper we are going to discuss a different kind of resonances. We suppose that the barriers separating the two waveguides are not semiinfinite but of finite lengths $\ell_\pm$, cf.~Fig.~\ref{fig1}. Consequently, at their far ends the guide is not divided and the essential spectrum threshold is lowered to the lowest transverse-mode energy of the joint guide. The memory of the bound states in the vicinity of the window remains, however, in the form of \emph{resonances} which are especially pronounced when the barriers are long. Of the various possible definitions of a resonance we choose the most `classical' one, based on solution of the corresponding equation with the specific asymptotic behavior at large distances. It is not difficult to see that such resonances coincide with those defined by an exterior complex scaling. One naturally expects that they are at the same time also scattering resonances of our double waveguide system but we will not address this question here.

Our aim in this paper is to demonstrate existence of these resonances and their behavior as the barrier lengths tend to infinity. We are going to show that for all sufficiently large $\ell_\pm$ the number of resonances coincides with the number of eigenvalues in the semi-infinite barrier case, and that the complex energies associated with the resonances converge to the latter as $\ell_\pm\to \infty$. Moreover, we will find the leading term in the corresponding asymptotic expansion. It contains the peculiar factors decaying exponentially with respect to $\ell_\pm$ reminiscent of the Agmon metric \cite{Ag} in case of potential barriers. One can think of it as of a tunneling effect even if the classical particle in such a guide would not be trapped. The role of potential barriers is played by the narrow parts of the channel; in the physicist's terminology one would say that all the modes are evanescent in the considered range of energies.

For the sake of simplicity we consider here only the case with transverse mirror symmetry when the two guides have the same width. The system then naturally decouples into the even and odd component with only the former one being nontrivial \cite{ESTV}. One can thus analyze one waveguide only; the absence of barriers in the window and in the `outer' regions is described by Neumann boundary conditions. In the next section we describe the problem in technical terms and formulate our main result expressed in Theorem~\ref{th1}. The rest of the paper is devoted to its proof.

Before passing to our proper problem, let us add a comment on its broader context. One can regard the exterior broad-channel (or Neumann) parts as distant perturbations separated by $\ell_-+2a+\ell_+$ with $\ell_\pm$ considered large. There is a large number of problems of this type. A classical example is a double-well Schr\"odinger operator with the wells wide apart. Such systems have been studied extensively --- see, for instance, \cite{D}, \cite{AS}, \cite{HK}, \cite{Ha}, \cite{MS}, \cite{KS}, \cite{Kl}, \cite{KS1}, \cite{GHS} --- and there is no need to stress that the mechanism determining their spectral properties is the same tunneling as mentioned above in connection with resonances. Recently we managed to analyze a considerably more general class of of operators with distant perturbations and to prove general convergence theorems and to describe asymptotic behavior of their spectra and the resolvents --- cf.~\cite{B1}, \cite{B2}, \cite{B3}, \cite{B4}, \cite{GAM}, \cite{GAM1}, \cite{GAMBDI}. The general approach we have developed is useful here, since the technique employed in this paper is based on the main ideas put forward in the cited work.

\section{Problem setting and the main result}

Let $x=(x_1,x_2)$ be Cartesian coordinates in the plane. By $\Pi$ we denote a horizontal strip of width  $\pi$, i.e.  $\Pi:=\{x: -\infty<x_1<+\infty, 0<x_2<\pi\} \subset \mathbb{R}^2$. On its lower boundary we fix two disjoint segments of lengths $\ell_+$ and $\ell_-$ assuming that the distance between them is $2a>0\,$ being centered at the origin of coordinates so that we have $\gamma_+:=\{a<x_1<\ell_+,\: x_2=0\}$ and $\gamma_-:=\{-\ell_-<x_1<-a,\: x_2=0\}$, respectively. Having in mind the physical meaning of the model we will speak of them as of (Dirichlet) \emph{barriers}. The remaining part of the lower boundary consisting of three intervals separated by $\gamma_\pm$ we denote as $\Gamma$, and the whole upper boundary as $\gamma$ -- cf.~Fig.~\ref{fig1}. Our aim is to analyze the following boundary-value problem,
\begin{equation}\label{1}
\begin{gathered}
\left(-\Delta-\lambda\right)u=0 \quad \text{in} \;\; \Pi\,,
\quad
u=0 \;\; \text{on} \;\; \gamma_+\cup\gamma_-\cup\gamma \;\; \text{and}\;\;
\frac{\p u}{\p x_2}=0 \;\; \text{on}\;\; \Gamma\,,
\end{gathered}
\end{equation}
with the behavior at infinity prescribed as
\begin{equation}\label{11}
\begin{aligned}
&u(x)=C_+(\l)
\,\E^{\iu \sqrt{\lambda-\frac{1}{4}}\,x_1}\cos
\frac{x_2}{2}+\Odr\big(\E^{-\RE{\sqrt{\frac{9}{4}-\lambda}}\,x_1}\big)
\quad &&\text{for} \quad x_1\to+\infty\,,
\\&
u(x)=C_-(\l)\,\E^{-\iu \sqrt{\lambda-\frac{1}{4}}\,x_1}\cos
\frac{x_2}{2}+\Odr\big(\E^{\RE{\sqrt{\frac{9}{4}-\lambda}}\,x_1}\big)
\quad &&\text{for} \quad x_1\to-\infty\,.
\end{aligned}
\end{equation}
The choice of the square-root branch is fixed by the relation $\sqrt{1}=1$, furthermore $C_\pm(\l)$  are some constants and  $\lambda\in\mathbb{C}$ is the spectral parameter. The solution is understood in the generalized sense as an element of the Sobolev space $\Hloc^1(\Pi)$, however, by embedding it is infinitely differentiable up to the boundary for all $|x_1|$ large enough which makes it possible to interpret the asymptotics (\ref{11}) in the classical sense.

As we have indicated the main object of our interest in this paper are resonances of the problem (\ref{1}), (\ref{11}) situated in the lower complex halfplane in the vicinity of the segment $[\frac{1}{4},1]$ for large values of the barrier lengths $\ell_+$ and $\ell_-$, in particular, their asymptotic behavior as those lengths tend to infinity. Resonances are understood here as the values of $\lambda$, for which the problem (\ref{1}), (\ref{11}) has a nontrivial solution.

\begin{figure}
\includegraphics[scale=0.7]{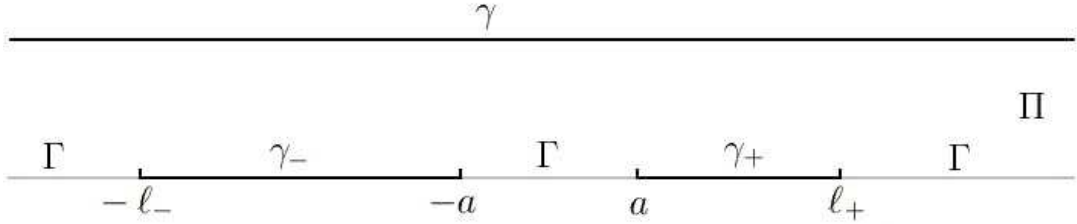}
\caption{Scheme of the model}\label{fig1}
\end{figure}

To formulate our main result we need a few more notions. Consider the segment $\Gamma_{a}:=\{|x_1|<a,\: x_2=0\}$ of length $2a$ on the lower boundary of $\Pi$ which may be called the \emph{central window}; the remaining part of the lower boundary will be denoted as $\gamma_{a}$, cf. the left picture on Fig.~\ref{fig2}. We introduce the following sesquilinear form on the space $L_2(\Pi)$,
\begin{equation*}
h_0(u,v):=\big(\nabla u, \nabla v\big)_{L_2(\Pi)}\,,
\end{equation*}
with the domain consisting of functions from $\H^1(\Pi)$ the trace of which vanishes on $\gamma_a\cup\gamma$. It is easy to see that this form is closed, symmetric, and bounded from below. The self-adjoint operator associated with it will be denoted as $\mathcal{H}_0$, in other words, $\mathcal{H}_0$ is the Laplacian on $\Pi$ with Dirichlet boundary condition on $\g\cup\g_a$ and Neumann one on $\G_a$.

It is well known \cite{ESTV, BEG, DIB2} that the operator $\mathcal{H}_0$ has for any $a>0$ a nonempty discrete spectrum consisting of a finite number of eigenvalues $\lambda_j$, $\:j=1,\ldots,n$, contained in the interval $(\frac{1}{4},1)$; without loss of generality we may assume that they are arranged in the ascending order. Each of these eigenvalues is simple and the corresponding eigenfunctions are even and odd in the variable $x_1$ for $j$ odd and even, respectively, and in the limit $x_1\to+\infty$ they behave asymptotically as
\begin{equation}\label{31}
\psi_j(x)=\Psi_j\,\E^{-\sqrt{1-\l_j}\, x_1}\sin
x_2+\Odr\big(\E^{-\sqrt{4-\l_j}\,x_1}\big)\,,
\end{equation}
where $\Psi_j$ are nonzero constants.

We shall split now the lower boundary of the strip $\Pi$ into two halflines $\g_*:=\{x: x_1<0, x_2=0\}$, $\G_*:=\{x: x_1>0, x_2=0\}$, cf. the right picture on Fig.~\ref{fig2}, and consider the boundary-value problem
\begin{equation}\label{49}
\begin{gathered}
\big(\Delta+\lambda_j\big)V_j=0\quad\text{in}\;\; \Pi\,,\quad
V_j=0 \;\; \text{on} \;\; \gamma\cup\gamma_*\,,
\quad
\frac{\p V_j}{\p x_2}=0 \;\; \text{on} \;\;\Gamma_*\,,
\end{gathered}
\end{equation}
with the following asymptotic behavior,
\begin{equation}\label{36}
\begin{aligned}
&V_j(x)=k^+_j\,\mathrm{e}^{\iu
\sqrt{\lambda_0-\frac{1}{4}}\,x_1}\cos
\frac{x_2}{2}+\Odr\big(\mathrm{e}^{- \sqrt{\frac{9}{4}-\lambda_0}\,x_1} \big)\,,
  &&   x_1\to+\infty\,,
\\
&
V_j(x)=k^-_j\,\mathrm{e}^{\sqrt{1-\lambda_0}\,x_1}\sin
x_2+\mathrm{e}^{-\sqrt{1-\lambda_0}\,x_1}\sin x_2
+\Odr\big(\mathrm{e}^{- \sqrt{4-\lambda_0}\,x_1} \big)\,,
&&  x_1\to-\infty\,,
\end{aligned}
\end{equation}
where $k_j^\pm$ are complex constants. Solution of such a problem is again understood in the generalized sense and in view of the embedding the asymptotics as $x_1\to\pm\infty$ are valid in the classical sense. In addition, it is not difficult to see that in the vicinity of the coordinate origin the function $V_j(\cdot)$ has a differentiable asymptotics, namely
\begin{equation}\label{38}
V_j(x)=\beta_j r^{\frac{1}{2}}\cos\frac{\theta}{2}+\Odr\big(r^{\frac{3}{2}}\big)\,,
\end{equation}
where $(r,\tht)$ are the appropriate polar coordinates and $\b_j$ is a complex constant. For the sake of brevity we write $\ell:=(\ell_+,\ell_-)$.

Now we are in position to state our main result.

\begin{figure}
\begin{center}
\includegraphics[scale=0.7]{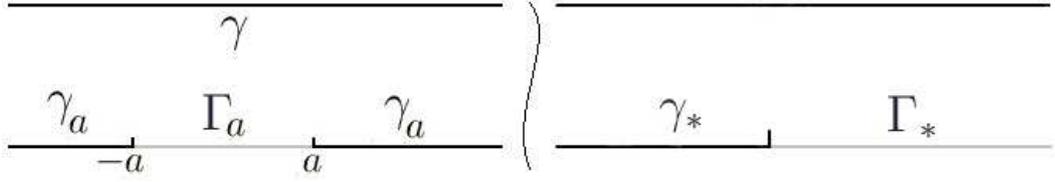}
\caption{The auxiliary problems}\label{fig2}
\end{center}
\end{figure}
\begin{theorem}\label{th1}
For all $\ell_+$ and $\ell_-$ large enough there is a unique resonance $\L_j(\ell)$ of the problem (\ref{1}), (\ref{11}) in the vicinity of each point $\lambda_j$, $\,j=1,\ldots,n$, and a unique nontrivial solution of (\ref{1}) corresponds to $\L_j$. The corresponding resonance asymptotics as \mbox{$\min\{\ell_+,\ell_-\} \to+\infty$} is given by the formula
\begin{align*}
\L_j(\ell)=&\lambda_j-k_j^-\pi
\sqrt{1-\lambda_j}|\Psi_j|^2 \Big(
\mathrm{e}^{-2\sqrt{1-\lambda_j}\,\ell_+}+
\mathrm{e}^{-2\sqrt{1-\lambda_j}\,\ell_-}\Big)
\\&
+\Odr\Big(\ell_+^2\, \mathrm{e}^{-3
\sqrt{1-\lambda_j}\,\ell_+} +\ell_-^2\, \mathrm{e}^{-3
\sqrt{1-\lambda_j}\,\ell_-}
\Big)\,.
\end{align*}
The constants $\b_j$ and $\Psi_{j}$ do not vanish and $k_j^-$ satisfies the relations
\begin{equation}\label{52}
\RE k_j^-=\frac{(\RE \beta_j)^2}{8(1-\lambda_j)}\geqslant 0\,,
\quad \IM k_j^-=\frac{(\IM \beta_j)^2}{8\sqrt{(\lambda_j-1/4)
(1-\lambda_j)}}>0\,.
\end{equation}
\end{theorem}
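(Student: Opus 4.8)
The plan is to construct an approximate solution of the problem \eqref{1}, \eqref{11} by gluing together the known eigenfunction $\psi_j$ of $\mathcal{H}_0$ near the window with copies of the scattering-type solutions $V_j$ of \eqref{49}, \eqref{36} transplanted to the far ends of the two barriers, and then to turn this quasimode into a genuine resonance by a fixed-point/implicit-function argument. Concretely, I would first reformulate the resonance condition: a value $\l$ is a resonance precisely when a certain operator-valued function (the ``scattering-like'' map sending the incoming data to the solution satisfying the outgoing condition \eqref{11}) is not injective, equivalently when a reduced finite-dimensional determinant vanishes. The strategy of \cite{B1}--\cite{GAMBDI} applies: one writes $\L_j(\ell)=\l_j+\mu$ with $\mu$ small and reduces the problem to a scalar equation $F(\mu,\ell)=0$, where $F$ is analytic in $\mu$ and $F(0,\infty)=0$ with $\p_\mu F(0,\infty)\neq 0$ coming from the simplicity of the eigenvalue $\l_j$.

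Second, I would build the ansatz explicitly. Near the central window use $\psi_j$; in the right barrier region $a<x_1<\ell_+$ the function $\psi_j$ decays like $\Psi_j\,\E^{-\sqrt{1-\l_j}x_1}\sin x_2$ by \eqref{31}, and past $x_1=\ell_+$ the geometry opens up (Neumann on the former barrier), so one matches this exponentially small tail to a multiple of the reflected-wave solution of the half-line problem \eqref{49}--\eqref{36}, reflected about $x_1=\ell_+$; similarly at the left end with $\ell_-$. The matching produces, at leading order, an amplitude proportional to $\Psi_j\,\E^{-\sqrt{1-\l_j}\ell_\pm}$ for the incoming wave hitting each broad-channel region, and the response of that region --- governed by the constant $k_j^-$ in \eqref{36} --- feeds back a correction of size $k_j^-|\Psi_j|^2\E^{-2\sqrt{1-\l_j}\ell_\pm}$ to the matching condition at the window. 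Summing the two ends and solving the reduced scalar equation gives the stated formula, with the error terms $\ell_\pm^2\E^{-3\sqrt{1-\l_j}\ell_\pm}$ arising from the next order of the asymptotics \eqref{31} (the $\E^{-\sqrt{4-\l_j}x_1}$ remainder is negligible, but the iteration of the leading exponential together with polynomial factors from the resolvent near the threshold yields the $\ell_\pm^2$).

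Third, for the identities \eqref{52} I would exploit the structure of the half-line problem \eqref{49}. The key is a Green-type (energy) identity for $V_j$ on $\Pi$: integrating $\overline{V_j}(\D+\l_j)V_j - V_j(\D+\l_j)\overline{V_j}=0$ and using the boundary conditions together with the asymptotics \eqref{36} at $x_1\to-\infty$ (which contains both an incoming $\E^{\sqrt{1-\l_j}x_1}$ and a reflected $k_j^-\E^{\sqrt{1-\l_j}x_1}$... rather the unit incoming and $k_j^-$-multiple reflected evanescent modes) and the purely outgoing wave $k_j^+\E^{\iu\sqrt{\l_0-1/4}x_1}$ at $x_1\to+\infty$, one extracts a flux balance. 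Because the wave at $+\infty$ carries real flux proportional to $\sqrt{\l_j-1/4}\,|k_j^+|^2$ while the evanescent channel at $-\infty$ contributes a term proportional to $\sqrt{1-\l_j}\,(1-|1+k_j^-|^2)$ or similar, one separately identifies $\RE k_j^-$ and $\IM k_j^-$. The local behavior \eqref{38} near the corner, with coefficient $\b_j$, enters through a second identity obtained by pairing $V_j$ with its complex conjugate and localizing near $r=0$: the singular $r^{1/2}\cos(\theta/2)$ part produces exactly the quadratic expressions $(\RE\b_j)^2$ and $(\IM\b_j)^2$ after computing the relevant boundary integral over the window (this is the standard mechanism by which a ``hard-soft'' corner singularity contributes to scattering coefficients). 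That the constants $\Psi_j$ and $\b_j$ are nonzero follows from unique continuation: if $\Psi_j=0$ then $\psi_j$ would decay faster than any exponential and be an eigenfunction vanishing on an open set, a contradiction; similarly $\b_j=0$ would force $V_j$ to vanish to higher order at the corner, again contradicting unique continuation applied to the solution of \eqref{49}.

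I expect the main obstacle to be the rigorous justification of the matching and the quantitative control of the remainder --- that is, proving that the quasimode constructed by gluing is close, in an appropriate norm adapted to the unbounded strip with the outgoing condition \eqref{11}, to an exact resonant solution, and that the implicit scalar equation $F(\mu,\ell)=0$ is genuinely solvable with the claimed error bound. This requires uniform (in $\ell$) resolvent-type estimates for the model operators on the broad-channel half-strips near the threshold $\l=1/4$, careful bookkeeping of the two independent small parameters $\E^{-\sqrt{1-\l_j}\ell_+}$ and $\E^{-\sqrt{1-\l_j}\ell_-}$, and verification that the corner singularity \eqref{38} does not spoil the $W_2^1$ regularity needed for the variational formulation. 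The uniqueness of the resonance near each $\l_j$ and the count matching the number of eigenvalues of $\mathcal{H}_0$ then follow from a Rouché-type argument applied to $F(\cdot,\ell)$ in a fixed small disc around $\l_j$, using that $\p_\mu F(0,\infty)\neq0$.
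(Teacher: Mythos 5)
Your reduction scheme for existence, uniqueness and the leading term (gluing the eigenfunction $\psi_j$ to shifted half-strip solutions, reducing to a scalar equation in $\lambda-\lambda_j$, and applying a Rouch\'e argument) is essentially the paper's route, and the heuristic for the size $k_j^-|\Psi_j|^2\E^{-2\sqrt{1-\lambda_j}\ell_\pm}$ of the correction is the right one. However, there are two genuine gaps in the parts of the theorem that are not pure perturbation theory. First, the identities (\ref{52}) cannot be obtained the way you propose. Pairing $V_j$ with $\overline{V_j}$ (the standard Green/flux identity) is blind to the corner coefficient $\beta_j$: near the origin $V_j\sim\beta_j r^{1/2}\cos\frac{\theta}{2}$ and $\p_r V_j\sim r^{-1/2}$, so the boundary term over a semicircle of radius $\e$ is $O(\e)$ and vanishes; that identity yields only the flux balance between $\IM k_j^-$ and $|k_j^+|^2$, i.e. the analogue of (\ref{55}), and no relation involving $\beta_j$. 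The paper's mechanism is different: one pairs $V_\pm$ (and separately $\RE V_\pm$, $\IM V_\pm$) with $\p_{x_1}V_\pm$, integrating $\overline{V}_\pm(\D+\lambda_j)\p_{x_1}V_\pm$ over $\Pi(-R,R)\setminus w_\e$; since $\p_{x_1}V_j\sim r^{-1/2}$, the semicircle terms now converge to nonzero multiples of $|\beta_j|^2$, $(\RE\beta_j)^2$, $(\IM\beta_j)^2$, which is exactly how (\ref{51}), (\ref{56}), (\ref{55}) and hence (\ref{52}) arise. Your ``second identity obtained by pairing $V_j$ with its complex conjugate'' has no mechanism to couple $\beta_j$ to $k_j^\pm$.

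Second, the non-vanishing statements are not consequences of unique continuation. If $\Psi_j=0$, then by (\ref{31}) only the first transverse mode is absent: $\psi_j$ still decays at the finite rate $\sqrt{4-\lambda_j}$, does not vanish on any open set, and no contradiction follows. The paper's Lemma~\ref{lm7} needs a genuinely different device: from $\Psi_j=0$ one deduces $\int_{-a}^{a}\E^{\pm\sqrt{1-\lambda_j}x_1}\psi_j(x_1,0)\di x_1=0$, constructs the hyperbolically weighted primitive $\Phi_j$, and integrates the resulting equation against $\psi_j$ to force $\int_\Pi\psi_j^2\cosh(\sqrt{1-\lambda_j}\,x_1)\di x=0$, i.e. $\psi_j\equiv0$. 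Likewise $\beta_j=0$ only means vanishing to order $r^{3/2}$ at a single boundary point, which unique continuation does not exclude; the paper instead proves $k_j^+\neq0$ (Lemma~\ref{lm8}, via the auxiliary functions $\widehat V_j$, $\widetilde V_j$, $Z_j$ and a contradiction with $\Hloc^1$ regularity of the would-be solution of (\ref{49})), and then $\IM\beta_j\neq0$ and $\IM k_j^->0$ follow from (\ref{56}) and (\ref{55}). These two lemmas, together with the derivative-pairing identities above, carry the substance of (\ref{52}) and of the positivity $\IM k_j^->0$ that makes $\L_j$ a true resonance; your sketch as written would not deliver them.
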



\section{Reduction to an operator equation}

The aim of this section is to rephrase the problem, using the ideas worked out in \cite{B1}, \cite{B2}, \cite{B3}, \cite{B4}, \cite{GAM}, \cite{GAM1}, \cite{GAMBDI}, as an operator equation; analyzing the latter we will be able to derive the leading terms in the resonance asymptotics.

Let $\widetilde{\chi}_\pm\in C^{\infty}(\mathbb{R})$ be a nonnegative cut-off function satisfying the relations
\begin{align*}
&\widetilde{\chi}_0^+(x_1)=0 \quad \text{for} \quad
x_1\geqslant 1, \quad &&\widetilde{\chi}_0^+(x_1)=1
\quad \text{for} \quad x_1\leqslant 0\,,
\\&
\widetilde{\chi}_0^-(x_1)=0 \quad \text{for} \quad
x_1\leqslant 0, \quad &&\widetilde{\chi}_0^-(x_1)=1
\quad \text{for} \quad x_1\geqslant 1\,.
\end{align*}
Furthermore, we consider nonnegative cut-off functions $\chi_0^\pm$, $\chi_\pm\in C^{\infty}(\mathbb{R})$ such that
\begin{align*}
&\chi_-(x_1)= 0 \quad \text{for} \quad x_1\geqslant -a+1\,,
\quad &&\chi_-(x_1)= 1 \quad \text{for} \quad x_1\leqslant
-a\,, \quad
\\&
\chi_+(x_1)= 0 \quad \text{for} \quad x_1\leqslant a-1\,,
\quad &&\chi_+(x_1)= 1 \quad \text{for} \quad x_1\geqslant
a\,,
\\&
\chi_0^+(x_1,\ell_+)=\widetilde{\chi}_0^+(x_1-\ell_+)\,,
\quad
&&\chi_0^-(x_1,\ell_-)=\widetilde{\chi}_0^-(x_1+\ell_-)\,.
\end{align*}
We denote by $\chi_0$ the function of the form
\begin{equation}\label{24}
\chi_0(x_1,\ell)=\chi_0^+(x_1,\ell_+)+\chi_0^-(x_1,\ell_-)-1
\end{equation}
and by $\mathcal{S}$ the shift operator acting as
\begin{equation*}
\big(\mathcal{S}(X)u\big)(x_1,x_2)=u(x_1-X,x_2)\,,
\end{equation*}
where $X$ is a real number. Let $\Pi(A,B)$ be a rectangular section of the strip, $\Pi(A,B):=\{A<x_1<B, \: 0<x_2<\pi\}$ determined by the numbers $A,\,B$.

We will introduce an auxiliary problem. We split the boundary as in (\ref{49}) and analyze the inhomogeneous boundary-value problem
\begin{equation}\label{222}
\begin{gathered}
\left(-\Delta-\lambda\right)u=g \quad \text{in} \;\; \Pi\,,
\quad
u=0 \;\; \text{on} \;\; \gamma\cup\gamma_*\,, \quad \frac{\p u}{\p x_2}=0 \;\;
\text{on} \;\; \Gamma_*\,,
\end{gathered}
\end{equation}
with the asymptotic behavior changed from (\ref{36}) to
\begin{equation}\label{2222}
\begin{aligned}
&u(x)=\widehat{C}_+(g,\l)
\,\mathrm{e}^{\iu
\sqrt{\lambda-\frac{1}{4}}\,x_1}\cos
\frac{x_2}{2}+\Odr\big(\mathrm{e}^{-\RE{\sqrt{\frac{9}{4}-\lambda}}\,x_1}\big)
\quad &&\text{for} \quad x_1\to+\infty\,,
\\&
u(x)=\widehat{C}_-(g,\l)\,\mathrm{e}^{\sqrt{1-\lambda}\,x_1}\sin
x_2+\Odr\big(\mathrm{e}^{-\RE{\sqrt{4-\lambda}}\,x_1}\big) \quad
&&\text{for} \quad x_1\to-\infty\,.
\end{aligned}
\end{equation}
We assume here that $g\in L_2(\Pi)$ is a function satisfying $\supp g \subseteq \Pi(-1,1)$ and $\widehat{C}_\pm(g,\l)$ are complex constants. Solutions of this problem are understood in the same sense as the solutions of (\ref{49}), (\ref{36}).

We shall seek solutions to equation (\ref{1}) in the form
\begin{equation}\label{3}
\begin{aligned}
u_{\ell}(x)&=\chi_0(x_1,\ell)u_0(x)
+\chi_+(x_1)\Big(\mathcal{S}(\ell_+)u_+\Big)(x)
+\chi_-(x_1)\Big(\mathcal{S}(-\ell_-)u_-\Big)(x)\,,
\end{aligned}
\end{equation}
where $u_0:=(\mathcal{H}_0-\lambda)^{-1}g_0$ with a function $g_0\in L_2(\Pi)$ having the property that $\supp g_0\subseteq \Pi(-a-1,a+1)$. By definition the function $u_0$ satisfies the equation
\begin{equation}\label{26}
\left(\mathcal{H}_0-\lambda\right)u_0=g_0
\end{equation}
and behaves asymptotically in the following way,
\begin{equation}\label{1111}
u_0(x)=C_\pm^0(g_0,\l)\,
\mathrm{e}^{\pm\sqrt{1-\lambda}\,x_1}\sin
x_2+\Odr\big(\mathrm{e}^{-\RE{\sqrt{4-\lambda}}\,x_1}\big) \quad
\text{for} \quad x_1\to\pm\infty\,,
\end{equation}
where $C_\pm^0(g_0,\l)$  are some constants. The function $u_+$ is determined as the solution to the boundary-value problem (\ref{222}), (\ref{2222}) with a right-hand side $g=g_+\in L_2(\Pi)$ satisfying $\supp g_+\subseteq \overline{\Pi(-1,1)}$. The function $u_-$ is introduced in a bit more complicated way, namely we suppose that its mirror image in the $x_1$ variable, $u_-(-x_1,x_2)$, solves the problem (\ref{222}), (\ref{2222}) with a right-hand side $g(x)=g_-(-x_1,x_2)$, where $g_-\in L_2(\Pi)$ with $\supp g_-\subseteq \overline{\Pi(-1,1)}$.

In view of the definition of $\chi_0$, $\chi_\pm$, $u_0$, $u_\pm$ the function $u_{\ell}$ given by (\ref{3}) satisfies the boundary conditions of the problem (\ref{1}) and has the needed asymptotic behavior (\ref{11}). It remains to check that $u_{\ell}$ solves the equation (\ref{1}). To this aim, we substitute the Ansatz (\ref{3}) into it obtaining
\begin{align*}
\big(-\Delta-\lambda\big)\Big(\chi_0(x_1,\ell)u_0(x)
+\chi_+(x_1)\big(\mathcal{S}(\ell_+)u_+\big)(x)
+\chi_-(x_1)\big(\mathcal{S}(-\ell_-)u_-\big)(x)\Big)=0\,.
\end{align*}
Taking into account the  cut-off functions definitions, the problem (\ref{222}) and the relation (\ref{24}), we arrive at the equations
\begin{equation}\label{5}
\begin{aligned}
&g_0+\mathcal{T}_+g_+
+\mathcal{T}_-g_-=0\,,
\\&
g_\pm+\widetilde{\mathcal{T}}_\pm g_0=0\,,
\end{aligned}
\end{equation}
where
\begin{equation}\label{4}
\begin{aligned}
&\mathcal{T}_\pm g_\pm=U_\pm\frac{d^2\chi_\pm}{d x_1^2}
+2\frac{d \chi_\pm}{d x_1} \frac{\p U_\pm}{\p x_1}\,, \quad
&&U_\pm=\mathcal{S}(\pm\ell_\pm)u_\pm\,,
\\&
\widetilde{\mathcal{T}}_\pm g_0=-\mathcal{S}(\mp\ell_\pm)\Big(u_0\frac{d^2 \chi_0^\pm}{d x_1^2}
+2\frac{\p u_0}{\p x_1}\frac{d \chi_0^\pm}{d x_1}\Big)\,.
\end{aligned}
\end{equation}
The operators $\mathcal{T}_\pm=\mathcal{T}_\pm(\l,\ell)$ map from $L_2(\Pi(-a-1,a+1))$, and $\widetilde{\mathcal{T}}_\pm =\widetilde{\mathcal{T}}_\pm(\l,\ell)$ from $L_2(\Pi(-a-1,a+1))$ into $L_2(\Pi(-1,1))$; it is obvious from the construction that all of them are bounded.

Next we are going to formulate a number of auxiliary result which we shall need to analyze the equations (\ref{5}). For simplicity we will denote in the following by $\lambda_0$ a fixed eigenvalue $\lambda_j$ of the operator $\mathcal{H}_0$ and $\psi_0$ will be the corresponding eigenfunction $\psi_j$.

\begin{lemma}\label{lm0}
The eigenfunction $\psi_0$ of $\mathcal{H}_0$ can be represented as a uniformly con\-vergent series,
\begin{equation}\label{29}
\psi_0(x)=\sum\limits_{n=1}^{\infty}
\Psi_{0,n}^\pm\,\mathrm{e}^{\mp\sqrt{n^2-\lambda_0}\, x_1}
\sin nx_2 \quad \text{for} \quad \pm x_1>a\,,
\end{equation}
where the convergence is understood in the sense of the norm of $\H^1\big(\Pi\backslash\Pi(-a,a)\big)$. The coefficients $\Psi_{0,n}^\pm$ satisfy the relations
\begin{equation}\label{28}
\sum\limits_{i=1}^{\infty}n|\Psi_{0,n}^\pm|^2\leqslant C
\|\psi_0\|_{\H^1\left(\Pi(-a,a)\right)}^2
\end{equation}
with some constant $C$.
\end{lemma}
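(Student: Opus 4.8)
The plan is to derive the expansion (\ref{29}) by separation of variables on each of the two exterior regions $\pm x_1 > a$, where the equation $(-\Delta - \lambda_0)\psi_0 = 0$ holds with Dirichlet conditions on $\gamma$ and on $\gamma_\pm$ (the barriers extend to $x_1 = \pm\ell_\pm$, but for the statement it suffices to work on $\{\pm x_1 > a\}$ where the lower boundary is Dirichlet up to the relevant length; in the semi-infinite picture underlying $\mathcal{H}_0$ the lower boundary $\gamma_a$ is Dirichlet for all $|x_1|>a$). On, say, $\Pi \setminus \Pi(-a,a)$ restricted to $x_1 > a$ the cross-section $(0,\pi) \ni x_2$ carries pure Dirichlet conditions, so $\{\sin n x_2\}_{n\ge 1}$ is an orthonormal basis of $L_2(0,\pi)$ (up to normalization). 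Expanding $\psi_0(x_1,\cdot)$ in this basis gives coefficients $c_n(x_1) = \frac{2}{\pi}\int_0^\pi \psi_0(x_1,x_2)\sin nx_2\,\di x_2$, and projecting the equation yields $c_n'' = (n^2 - \lambda_0)c_n$ for $x_1 > a$. Since $\lambda_0 \in (\frac14,1) \subset (0,1)$ we have $n^2 - \lambda_0 > 0$ for every $n \ge 1$, so each mode is a combination of $\mathrm{e}^{\pm\sqrt{n^2-\lambda_0}\,x_1}$; the requirement $\psi_0 \in \H^1(\Pi)$ (in particular $L_2$ at infinity) kills the growing exponential, leaving $c_n(x_1) = \Psi_{0,n}^+\,\mathrm{e}^{-\sqrt{n^2-\lambda_0}\,x_1}$, and symmetrically $\Psi_{0,n}^-\,\mathrm{e}^{\sqrt{n^2-\lambda_0}\,x_1}$ for $x_1 < -a$. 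This reproduces (\ref{29}) formally.

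The two quantitative points are the coefficient bound (\ref{28}) and the upgrade of the $L_2$-in-$x_2$ pointwise expansion to $\H^1$-norm convergence on the whole exterior region. For the bound: fix the reference cross-section $x_1 = a$. By the trace theorem, $\psi_0(a,\cdot) \in \H^{1/2}(0,\pi)$ with $\|\psi_0(a,\cdot)\|_{\H^{1/2}(0,\pi)}^2 \le C\|\psi_0\|_{\H^1(\Pi(-a,a))}^2$ (or one may take a thin slab $\Pi(a-\delta,a)$ and use the interior trace estimate). Writing the Fourier–sine coefficients of $\psi_0(a,\cdot)$ as $a_n$, one has $\Psi_{0,n}^+ = a_n\,\mathrm{e}^{\sqrt{n^2-\lambda_0}\,a}$, and the standard characterization of $\H^{1/2}(0,\pi)$ gives $\sum_n (1+n)^{1/2}\,|a_n|^2 \asymp \|\psi_0(a,\cdot)\|_{\H^{1/2}}^2$; the factor $\mathrm{e}^{2\sqrt{n^2-\lambda_0}\,a}$ is harmless once one notes that it enters the coefficients on $\{x_1>a\}$ together with the decaying factor $\mathrm{e}^{-2\sqrt{n^2-\lambda_0}\,x_1}$. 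Actually the cleanest route for (\ref{28}) is to compute $\|\psi_0\|_{\H^1(\Pi(a,a+1))}^2$ directly from the series: the mode $n$ contributes, up to constants, $(1+n^2-\lambda_0)\int_a^{a+1}|\Psi_{0,n}^+|^2\mathrm{e}^{-2\sqrt{n^2-\lambda_0}x_1}\di x_1$, which is comparable to $n\,|\Psi_{0,n}^+|^2\,\mathrm{e}^{-2\sqrt{n^2-\lambda_0}a}$ for large $n$; since $\mathrm{e}^{-2\sqrt{n^2-\lambda_0}a}$ is bounded below only after absorbing it, one instead evaluates $\H^1$ of $\psi_0$ over a slab touching $x_1=a$ from inside $\Pi(-a,a)$ where no exponential gain occurs, giving directly $\sum_n n|\Psi_{0,n}^\pm|^2 \le C\|\psi_0\|_{\H^1(\Pi(-a,a))}^2$ after matching Cauchy data across $x_1=a$ by elliptic regularity up to the boundary.

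For the uniform $\H^1$-convergence of the series on $\Pi \setminus \Pi(-a,a)$: once (\ref{28}) is in hand, the tail $\sum_{n>N}\Psi_{0,n}^\pm\,\mathrm{e}^{\mp\sqrt{n^2-\lambda_0}x_1}\sin nx_2$ is estimated in $\H^1(\{x_1>a\})$ by $\sum_{n>N}(1+n^2)\,|\Psi_{0,n}^+|^2\int_a^\infty \mathrm{e}^{-2\sqrt{n^2-\lambda_0}x_1}\di x_1 \le C\sum_{n>N} n\,|\Psi_{0,n}^+|^2 \to 0$, because the $x_1$-integral produces a factor $(\sqrt{n^2-\lambda_0})^{-1}\mathrm{e}^{-2\sqrt{n^2-\lambda_0}a} \le Cn^{-1}$ that exactly cancels the extra power of $n$ from the $\H^1$ weight. (The $x_1$-derivative of mode $n$ brings down $\sqrt{n^2-\lambda_0}$, squared gives $n^2-\lambda_0$, still controlled the same way.) This shows the partial sums form a Cauchy sequence in $\H^1(\Pi\setminus\Pi(-a,a))$ and the limit is $\psi_0$ by the $L_2(0,\pi)$-in-$x_2$ identification above. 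The main obstacle is the bookkeeping at $x_1 = a$ — passing the regularity and the Fourier coefficients of $\psi_0$ across the corner points where the boundary condition switches from Neumann (on $\Gamma_a$, $|x_1|<a$) to Dirichlet (on $\gamma_a$, $|x_1|>a$); one handles this by working with a slab strictly inside $\{|x_1|<a\}$ so that $\psi_0$ is already known to be $\H^1$ there and the trace onto $x_1=a$ is controlled without needing smoothness across the junction.
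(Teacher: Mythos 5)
Your separation-of-variables derivation of the expansion (\ref{29}) and of its convergence in $\H^1\big(\Pi\setminus\Pi(-a,a)\big)$ is exactly the route the paper has in mind (the paper offers no more than the remark that the lemma follows by separation of variables), and that part of your argument is sound. The genuine gap is in your treatment of (\ref{28}). With the normalization of (\ref{29}) one has $\Psi_{0,n}^{+}=a_n\,\mathrm{e}^{a\sqrt{n^2-\lambda_0}}$, where $a_n$ are the sine coefficients of the trace $\psi_0(a,\cdot)$; your trace-theorem computation controls $\sum_n n|a_n|^2$, i.e. the \emph{weighted} sum $\sum_n n|\Psi_{0,n}^{+}|^2\mathrm{e}^{-2a\sqrt{n^2-\lambda_0}}$, and the factor $\mathrm{e}^{2a\sqrt{n^2-\lambda_0}}$ which you dismiss as ``harmless'' is precisely what separates this from the stated bound. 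Neither of your proposed repairs closes the gap: computing $\|\psi_0\|^2_{\H^1(\Pi(a,a+1))}$ from the series reproduces the same weighted sum, and ``matching Cauchy data across $x_1=a$ by elliptic regularity up to the boundary'' is not available, because at the Dirichlet--Neumann junctions $(\pm a,0)$ the eigenfunction has an $r^{1/2}$-type singularity (the same local behavior the paper records for $V_j$ in (\ref{38})), so $\psi_0$ is not $\H^2$ up to those corner points. In fact this singularity shows that the unweighted bound cannot be obtained along your lines at all: the trace coefficients then decay only like $n^{-3/2}$, so $\sum_n n|\Psi_{0,n}^{\pm}|^2=\sum_n n|a_n|^2\mathrm{e}^{2a\sqrt{n^2-\lambda_0}}$ diverges whenever the corner coefficient is nonzero, i.e. (\ref{28}) as literally written is incompatible with the normalization chosen in (\ref{29}).

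What your argument does prove --- and what is actually needed and used later, e.g. in the proof of Lemma~\ref{lm1}, where $\Psi_{0,n}^{\pm}$ always enters in the combination $\Psi_{0,n}^{\pm}\mathrm{e}^{-\sqrt{n^2-\lambda_0}\,a}$ --- is the estimate $\sum_n n|\Psi_{0,n}^{\pm}|^2\mathrm{e}^{-2a\sqrt{n^2-\lambda_0}}\leqslant C\|\psi_0\|^2_{\H^1(\Pi(-a,a))}$, equivalently (\ref{28}) after renormalizing the exponentials as $\mathrm{e}^{\mp\sqrt{n^2-\lambda_0}(x_1\mp a)}$. You should state and prove it in that form; then your tail estimate for the $\H^1$-convergence also goes through cleanly provided you keep the factor $\mathrm{e}^{-2a\sqrt{n^2-\lambda_0}}$ produced by the $x_1$-integration instead of discarding it (as written, your bound $\leqslant C\sum_{n>N}n|\Psi_{0,n}^{+}|^2$ appeals to the problematic unweighted sum). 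Two small slips to correct as well: the $\H^{1/2}(0,\pi)$ characterization is $\sum_n(1+n)|a_n|^2$, not $\sum_n(1+n)^{1/2}|a_n|^2$, and for $\mathcal{H}_0$ the Dirichlet part of the lower boundary outside the window is the whole of $\gamma_a$, not the finite barriers $\gamma_\pm$.
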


It is easy to check the lemma using separation of variables.

\begin{lemma}\label{lm1}
For any $f\in L_2(\Pi)$ with $\supp f \subseteq \Pi(-a-1,a+1)$ and all $\lambda$ in the vicinity of $\lambda_0$ we have the representation
\begin{equation}\label{10}
\big(\mathcal{H}_0-\lambda\big)^{-1}f=
\frac{(f,\psi_0)_{L_2(\Pi)}}{\lambda_0-\lambda}\,\psi_0
+\mathcal{R}_0(\lambda)f\,,
\end{equation}
where $\mathcal{R}_0(\lambda)$ is the reduced resolvent in the sense of Kato, holomorphic in $\lambda$ in the vicinity of $\lambda_0$ and acting in the orthogonal complement to the eigenfunction $\psi_0$. Moreover, we have the estimate
\begin{equation}\label{12}
\|\mathcal{R}_0(\lambda)f\|_{\H^2(\Pi\backslash\Pi(-b,b))}\leqslant
C b \, \mathrm{e}^{-\varrho b}\|f\|_{L_2(\Pi(-a-1,a+1))}\,,
\end{equation}
where $\varrho=\min\Big\{\sqrt{1-\lambda_0}, \RE\sqrt{1-\lambda}\Big\}$, $\:b\geqslant a+1$, and  $C$ is a constant independent of $f$, $\lambda$ and $b$.
\end{lemma}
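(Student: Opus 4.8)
The plan is to treat the two assertions separately. The representation (\ref{10}) is merely the Laurent expansion of the resolvent of $\mathcal{H}_0$ about the isolated simple eigenvalue $\l_0$: the Riesz projection is $P_0=(\cdot,\psi_0)_{L_2(\Pi)}\psi_0$ with $\psi_0$ normalised, the reduced resolvent $\mathcal{R}_0(\l):=(\mathcal{H}_0-\l)^{-1}-(\l_0-\l)^{-1}P_0$ is the standard object of Kato's perturbation theory, holomorphic in a complex neighbourhood of $\l_0$, mapping $L_2(\Pi)$ into $\{\psi_0\}^\perp\cap\Dom(\mathcal{H}_0)$ and obeying $(\mathcal{H}_0-\l)\mathcal{R}_0(\l)=\I-P_0$ on $\{\psi_0\}^\perp$; applying $(\mathcal{H}_0-\l)^{-1}=(\l_0-\l)^{-1}P_0+\mathcal{R}_0(\l)$ to $f$ gives (\ref{10}). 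Since for $\l$ close enough to $\l_0$ the rest of $\spec(\mathcal{H}_0)$ stays at a fixed positive distance, $\mathcal{R}_0(\l)$ is bounded on $L_2(\Pi)$ uniformly in $\l$, and using $\mathcal{H}_0 v=\l v+f-(f,\psi_0)_{L_2(\Pi)}\psi_0$ for $v:=\mathcal{R}_0(\l)f$ together with elliptic regularity away from the two corners of the window one obtains $\|v\|_{\H^1(\Pi)}\leqslant C\|f\|_{L_2(\Pi)}$ with $C$ independent of $\l$.

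For the decay bound (\ref{12}) I use that, by the previous relation, $v=\mathcal{R}_0(\l)f$ solves $(-\D-\l)v=-(f,\psi_0)_{L_2(\Pi)}\psi_0$ in $\Pi\backslash\overline{\Pi(-a,a)}$ with Dirichlet conditions on $\g\cup\g_a$, because $\supp f\subseteq\Pi(-a-1,a+1)$ and $\mathcal{H}_0$ acts there as the Dirichlet Laplacian. Expanding in the transverse basis, $v(x)=\sum_{n\geqslant1}v_n^\pm(x_1)\sin nx_2$ for $\pm x_1>a$, and inserting the series (\ref{29}) for $\psi_0$, the coefficient $v_n^+$ satisfies for $x_1>a+1$ the ordinary differential equation $-(v_n^+)''+(n^2-\l)v_n^+=-(f,\psi_0)_{L_2(\Pi)}\Psi_{0,n}^+\,\E^{-\mu_n x_1}$ with $\mu_n:=\sqrt{n^2-\l_0}$; since $v\in L_2(\Pi)$ this forces
\[
v_n^+(x_1)=p_n\,\E^{-\nu_n x_1}-B_n\,\E^{-\mu_n x_1}\,,\qquad \nu_n:=\sqrt{n^2-\l}\,,\quad B_n:=\frac{(f,\psi_0)_{L_2(\Pi)}\Psi_{0,n}^+}{\l_0-\l}\,.
\]

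The crucial point is a cancellation of singularities. The function $v_n^+$ is holomorphic in $\l$ near $\l_0$, whereas $B_n$ has there a simple pole with residue $-(f,\psi_0)_{L_2(\Pi)}\Psi_{0,n}^+$; hence $p_n=B_n+\h p_n$ with $\h p_n$ holomorphic, and $v_n^+(x_1)=B_n(\E^{-\nu_n x_1}-\E^{-\mu_n x_1})+\h p_n\,\E^{-\nu_n x_1}$. For the first term I use $|\E^{-\nu_n x_1}-\E^{-\mu_n x_1}|\leqslant|\nu_n-\mu_n|\,x_1\,\E^{-\vr x_1}$ — which follows from $\RE\nu_n\geqslant\RE\sqrt{1-\l}\geqslant\vr$ and $\mu_n\geqslant\sqrt{1-\l_0}\geqslant\vr$ — together with the identity $|B_n|\,|\nu_n-\mu_n|=|(f,\psi_0)_{L_2(\Pi)}|\,|\Psi_{0,n}^+|\,|\nu_n+\mu_n|^{-1}\leqslant Cn^{-1}|(f,\psi_0)_{L_2(\Pi)}|\,|\Psi_{0,n}^+|$; summing over $n$ with Lemma~\ref{lm0} (and $\sum_n n^{-2}|\Psi_{0,n}^+|^2\leqslant\sum_n n^{-1}|\Psi_{0,n}^+|^2$) and using $\int_b^\infty x_1^2\E^{-2\vr x_1}\di x_1\leqslant Cb^2\E^{-2\vr b}$ bounds its $L_2(\{x_1>b\})$-norm by $Cb\,\E^{-\vr b}\|f\|_{L_2(\Pi)}$. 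For the second term I write $\h p_n\E^{-\nu_n x_1}=r_n\E^{-\nu_n(x_1-a-1)}$ with $r_n:=v_n^+(a+1)+B_n(\E^{-\mu_n(a+1)}-\E^{-\nu_n(a+1)})$; the trace theorem on the section $\{x_1=a+1\}$ and the $\H^1$-bound above give $\sum_n|v_n^+(a+1)|^2\leqslant C\|v\|_{\H^1(\Pi)}^2\leqslant C\|f\|_{L_2(\Pi)}^2$, while the other summand is estimated as before, so $\sum_n|r_n|^2\leqslant C\|f\|_{L_2(\Pi)}^2$ and this part contributes at most $C\E^{-\vr b}\|f\|_{L_2(\Pi)}$. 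Adding these and the mirror contributions for $x_1<-a$ yields $\|v\|_{L_2(\Pi\backslash\Pi(-b,b))}\leqslant Cb\,\E^{-\vr b}\|f\|_{L_2(\Pi)}$; a standard interior $\H^2$-estimate for $(-\D-\l)v=-(f,\psi_0)_{L_2(\Pi)}\psi_0$ on the smooth region $\{|x_1|>b-1\}$, together with the exponential decay of $\psi_0$ itself from (\ref{29}), upgrades this to the claimed bound (\ref{12}).

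I expect the main obstacle to be precisely the cancellation in the lowest transverse mode $n=1$: there $\nu_1\to\mu_1$ as $\l\to\l_0$, so $B_1$ blows up and only in combination with the equally singular coefficient $p_1$ does $v_1^+$ remain regular, the surviving linear factor $x_1$ being exactly what generates the prefactor $b$ in (\ref{12}). Making this cancellation uniform in $\l$, while simultaneously keeping the sharp rate $\vr=\min\{\sqrt{1-\l_0},\RE\sqrt{1-\l}\}$ and summing the transverse series by means of Lemma~\ref{lm0}, is the technical heart of the argument; the remaining ingredients — Kato's formula and interior elliptic regularity — are routine.
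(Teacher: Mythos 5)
Your argument is correct and follows essentially the same route as the paper: Kato's formula for the representation (\ref{10}), separation of variables in $\pm x_1>a$, a splitting of each mode into a difference-of-exponentials term (whose singularity at $\l=\l_0$ is cancelled via $|\sqrt{n^2-\l}-\sqrt{n^2-\l_0}|\leqslant C|\l-\l_0|/n$, producing the factor $b$) plus a homogeneous term controlled through the trace of $\mathcal{R}_0(\l)f$ and a uniform $\H^1$ bound, summed by means of Lemma~\ref{lm0}. The only cosmetic deviations are that you isolate the singular part of the homogeneous coefficient by a holomorphy/residue argument where the paper matches the trace at $x_1=a$ directly, and you upgrade from $L_2$ to $\H^2$ by local elliptic estimates where the paper differentiates the explicit series.
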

\begin{proof}
Validity of the expansion (\ref{10}) follows from formula (3.21) in \cite[Sec.~V.3.5]{K}. Let us check the estimate (\ref{12}). To this aim, we denote $w=\mathcal{R}_0(\lambda)f\in\H^2\big(\Pi\backslash\Pi(-b,b)\big)$. From (\ref{26}), (\ref{1111}) and the definition of the function $f$ we derive an equation which should be satisfied by the function $w$,
\begin{equation*}
\begin{gathered}
(-\Delta-\lambda)w=K\psi_0 \quad \text{in}\quad \Pi\backslash \Pi(-a-1,a+1)\,,
\quad  K:=-(f,\psi_0)_{L_2(\Pi)}\,.
\end{gathered}
\end{equation*}
Substituting now the expansion (\ref{29}) for the eigenfunction $\psi_0$ from Lemma~\ref{lm0}, we get
\begin{align*}
&\left((-\Delta-\lambda)w\right)(x_1,x_2) =
K\sum\limits_{n=1}^{\infty}\Psi_{0,n}^\pm\,
\mathrm{e}^{\mp\sqrt{1-\lambda_0}\,x_1
}\sin
nx_2 \quad \text{for} \quad \pm x_1>a+1\,.
\end{align*}
Solutions to the last equation are found easily using separation of variables. They are of the form $w=\mathrm{P}_1 +\mathrm{P}_2$, where
\begin{align*}
&\mathrm{P}_1=K\sum\limits_{n=1}^{\infty}\frac{\Psi_{0,n}^\pm\, \E^{-\sqrt{n^2-\l_0}\,a}}{\lambda_0-\lambda}
\Big(\mathrm{e}^{\mp\sqrt{n^2-\lambda_0}(x_1\mp a)}-\mathrm{e}^{\mp\sqrt{n^2-\lambda}(x_1\mp a)}\Big)\sin
n x_2, &&  \pm x_1>a+1,
\\
&\mathrm{\mathrm{P}}_2=\sum\limits_{n=1}^{\infty}W_{0,n}^\pm
\mathrm{e}^{\mp\sqrt{n^2-\lambda_0}(x_1\mp a)}\sin
nx_2, && \pm x_1>a+1,
\end{align*}
and $W_{0,n}^\pm$ are the Fourier coefficients of $w(\pm a, x_2)$, respectively. We have the estimates
\begin{equation}\label{32}
\sum\limits_{n=1}^{\infty}n|W_{0,n}^\pm|^2\leqslant C
\|w\|_{\H^1\left(\Pi(-b,b)\right)}^2.
\end{equation}
Here and in the remaining part of the proof we will denote by $C$ unspecified constants independent of $\lambda$, $n$ and $f$.

Let us check the estimate (\ref{12}) for $x_1>b$. Writing $x=(x_1,x_2)$ it is straight\-forward to check the inequality
\begin{equation}\label{44}
\|w\|_{L_2\left(\Pi(b,+\infty)\right)}^2\leqslant
2\int\limits_{\Pi(b,+\infty)}\big|\mathrm{P}_1(x)\big|^2\di x
+2\int\limits_{\Pi(b,+\infty)}\big|\mathrm{P}_2(x)\big|^2\di x\,.
\end{equation}
Let us evaluate the first integral,
\begin{align}\label{19}
\int\limits_{\Pi(b,+\infty)}\big|\mathrm{P}_1(x)\big|^2\di x
=\frac{\pi}{2}K^2\sum\limits_{n=1}^{\infty}\frac{|\Psi_{0,n}^+|^2\, \E^{-2\sqrt{n^2-\l_0}\,a}}
{|\lambda-\lambda_0|^2}\,
\big(\mathrm{N}_1+\mathrm{N}_2\big)\,,
\end{align}
where
\begin{equation}\label{23}
\begin{aligned}
&\mathrm{N}_1:=\int\limits_{b}^{+\infty}
\mathrm{e}^{-2\RE\sqrt{n^2-\lambda}\,x_1}
\sin^2\IM\sqrt{n^2-\lambda}\,x_1\:\di x_1\,,
\\&
\mathrm{N}_2:=\int\limits_{b}^{+\infty}\mathrm{e}^{-2\sqrt{n^2-\lambda_0}\,x_1}
\Big(1
-\mathrm{e}^{-\big(\RE\sqrt{n^2-\lambda}-\sqrt{n^2-\lambda_0}\big)\,x_1}
\cos\IM\sqrt{n^2-\lambda}x_1\Big)^2\di x_1\,.
\end{aligned}
\end{equation}
Using now the representation
\begin{equation*}
\begin{aligned}
1-\mathrm{e}^{-\mu x}\cos\nu x=\big(1-\mathrm{e}^{-\mu x}\big)
+\mathrm{e}^{-\mu x}\big(1-\cos\nu x\big) \quad \text{with} \quad \mu, \nu\in \mathbb{R}\,,
\end{aligned}
\end{equation*}
we derive an estimate to the second integral in (\ref{23}), namely
\begin{equation*}
\mathrm{N}_2
\leqslant 2\mathrm{Q}_1+
2\mathrm{Q}_2\,,
\end{equation*}
where
\begin{align*}
&\mathrm{Q}_1:=\int\limits_{b}^{+\infty}\mathrm{e}^{-2\sqrt{n^2-\lambda_0}\,x_1}
\Big(1
-\mathrm{e}^{-\big(\RE\sqrt{n^2-\lambda}-\sqrt{n^2-\lambda_0}\big)x_1}\Big)^2\di x_1,
\\&
\mathrm{Q}_2:=\int\limits_{b}^{+\infty}\mathrm{e}^{-2\RE\sqrt{n^2-\lambda}\,x_1}
\sin^{4}\frac{\IM\sqrt{n^2-\lambda}\,x_1}{2}\,\di x_1.
\end{align*}
Using the real and imaginary parts of the identity
\begin{equation*}
\begin{gathered}
\sqrt{n^2-\lambda}-\sqrt{n^2-\lambda_0}
=\frac{\lambda-\lambda_0}{\sqrt{n^2-\lambda}+\sqrt{n^2-\lambda_0}}
\end{gathered}
\end{equation*}
and taking into account the considered range of $\lambda$, we get the inequality
\begin{align*}
&\max\left\{ \RE\big(\sqrt{n^2-\lambda}-\sqrt{n^2-\lambda_0}\big),\:
\IM\big(\sqrt{n^2-\lambda}-\sqrt{n^2-\lambda_0}\big) \right\}
\leqslant
C\,\frac{|\lambda-\lambda_0|}{n}\,.
\end{align*}
Combining it with the estimate
\begin{equation*}
1 -\mathrm{e}^{-t}\leqslant
 t\, \mathrm{e}^{|t|},\quad t\in \mathbb{R}\,,
\end{equation*}
we infer that
\begin{equation}\label{27}
\max\{\mathrm{N}_1,\, \mathrm{Q}_1,\, \mathrm{Q}_2\} \leqslant C b^2\,\mathrm{e}^{-2\RE\sqrt{n^2-\lambda}\,b}\,\frac{|\lambda-\lambda_0|^2}
{n^2}\,.
\end{equation}
Substituting the last estimate into (\ref{19}) and taking into account the bound (\ref{28}) for the coefficients $\Psi_{0,n}^+$ from Lemma~\ref{lm0} together with the relation $K=-(f,\psi_0)_{L_2(\Pi)}$, we find
\begin{equation}\label{34}
\int\limits_{\Pi(b,+\infty)}\big|\mathrm{P}_1(x)\big|^2\di x\leqslant C b^2\,
\mathrm{e}^{-2\varrho b}\|f\|_{L_2(\Pi(-a-1,a+1))}^2\,.
\end{equation}
Arguing in a similar way and using the inequalities (\ref{32}), (\ref{27}) in combination with the relation $w=\mathcal{R}_0(\lambda)f$, we derive a bound for the second integral in (\ref{44}),
\begin{equation*}
\int\limits_{\Pi(b,+\infty)}\big|\mathrm{P}_2(x)\big|^2\di x
\leqslant C b^2\, \mathrm{e}^{-2b\sqrt{1-\lambda_0}}\|f\|_{L_2(\Pi(-a-1,a+1))}^2\,.
\end{equation*}
The last result together with (\ref{34}) yields the estimate
\begin{equation*}
\|w\|_{L_2(\Pi(b,+\infty))}\leqslant
C b\, \mathrm{e}^{-\varrho b}\|f\|_{L_2(\Pi(-a-1,a+1))}\,.
\end{equation*}
Calculation the first and the second derivatives of the function $W$ and using the analogous reasoning, it is straightforward to check the estimates
\begin{equation*}
\begin{aligned}
\|w\|_{\H^1(\Pi(b,+\infty))}\leqslant
C b\, \mathrm{e}^{-\varrho b}\|f\|_{L_2(\Pi(-a-1,a+1))}\,,
\\
\|w\|_{\H^2(\Pi(b,+\infty))}\leqslant
C b\, \mathrm{e}^{-\varrho b}\|f\|_{L_2(\Pi(-a-1,a+1))}\,.
\end{aligned}
\end{equation*}
This implies the sought estimate (\ref{32}) for the norm $\|\mathcal{R}_0(\lambda)f\|_{\H^2(\Pi(b,+\infty))}$.
The bound for $\|\mathcal{R}_0(\lambda)f\|_{\H^2(\Pi(-\infty,b))}$ is checked in the same way.
\end{proof}
\begin{lemma}\label{lm2}
For any function $f\in L_2(\Pi)$ with $\supp f \subseteq \Pi(-1,1)$ and all $\lambda$ in the vicinity of $\lambda_0$,
the operators $\widetilde{\mathcal{T}}_\pm$ are of the form
\begin{equation*}
\begin{aligned}
&\widetilde{\mathcal{T}}_\pm(\l,\ell)f=
-\frac{(f,\psi_0)_{L_2(\Pi)}}{\lambda_0-\lambda}\, \mathcal{S}(\mp\ell_\pm)
\Big(\psi_0\frac{d^2\chi_0^+}{d x_1^2}
+2\frac{\p \psi_0}{\p x_1}\frac{d \chi_0^+}{d
x_1}\Big)
-\mathcal{R}_\pm(\lambda,\ell)f\,,
\end{aligned}
\end{equation*}
where the operators $\mathcal{R}_\pm(\lambda,\ell): L_2(\Pi(-a-1,a+1))\to L_2(\Pi(-1,1))$ are holomorphic as functions of $\lambda$ in the vicinity of $\lambda_0$. Moreover, the inequality
\begin{equation*}
\begin{aligned}
&\big\|\mathcal{R}_\pm(\lambda,\ell) f\big\|_{\H^2(\Pi(-1,1))}
\leqslant C \ell_\pm\, \mathrm{e}^{-\varrho
\ell_\pm}\|f\|_{L_2(\Pi(-1,1) )},
\end{aligned}
\end{equation*}
holds true with a constant $C$ independent of $f$, $\ell_\pm$, and  $\lambda$.
\end{lemma}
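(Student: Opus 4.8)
The plan is to feed the spectral decomposition of $(\mathcal{H}_0-\lambda)^{-1}$ from Lemma~\ref{lm1} into the defining formula~(\ref{4}) for $\widetilde{\mathcal{T}}_\pm$, to read off the rank-one term, and then to identify and estimate the regular remainder $\mathcal{R}_\pm$. By (\ref{4}) we have $\widetilde{\mathcal{T}}_\pm f=-\mathcal{S}(\mp\ell_\pm)\big(u_0\,\frac{d^2\chi_0^\pm}{d x_1^2}+2\frac{\p u_0}{\p x_1}\frac{d\chi_0^\pm}{d x_1}\big)$ with $u_0=(\mathcal{H}_0-\lambda)^{-1}f$. Substituting $u_0=\frac{(f,\psi_0)_{L_2(\Pi)}}{\lambda_0-\lambda}\psi_0+\mathcal{R}_0(\lambda)f$ according to (\ref{10}) and using linearity, the $\psi_0$-contribution reproduces precisely the rank-one operator displayed in the statement (carrying the simple pole at $\lambda=\lambda_0$), whereas the $\mathcal{R}_0(\lambda)f$-contribution is
\[ \mathcal{R}_\pm(\lambda,\ell)f:=\mathcal{S}(\mp\ell_\pm)\Big((\mathcal{R}_0(\lambda)f)\,\frac{d^2\chi_0^\pm}{d x_1^2}+2\,\frac{\p(\mathcal{R}_0(\lambda)f)}{\p x_1}\,\frac{d\chi_0^\pm}{d x_1}\Big). \]
Holomorphy of $\mathcal{R}_\pm(\cdot,\ell)$ in a neighbourhood of $\lambda_0$ is then immediate, since the cut-offs and the shift operator do not depend on $\lambda$ while $\mathcal{R}_0(\cdot)$ is holomorphic there by Lemma~\ref{lm1}.

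For the estimate, the crucial observation is that, by the definitions preceding (\ref{24}), the derivatives $\frac{d\chi_0^+}{d x_1},\frac{d^2\chi_0^+}{d x_1^2}$ are supported in $\{\ell_+\leqslant x_1\leqslant\ell_++1\}$ and $\frac{d\chi_0^-}{d x_1},\frac{d^2\chi_0^-}{d x_1^2}$ in $\{-\ell_--1\leqslant x_1\leqslant-\ell_-\}$; thus only the values of $w:=\mathcal{R}_0(\lambda)f$ far from the window $\Pi(-a-1,a+1)$ enter $\mathcal{R}_\pm(\lambda,\ell)f$. Applying the decay estimate (\ref{12}) with $b=\ell_\pm-1\geqslant a+1$, which is admissible once $\ell_\pm$ is large, we get $\|w\|_{\H^2(\Pi\setminus\Pi(1-\ell_\pm,\ell_\pm-1))}\leqslant C\ell_\pm\,\mathrm{e}^{-\varrho\ell_\pm}\|f\|_{L_2(\Pi(-1,1))}$, which in particular controls $w$ on those support windows.

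The only slightly delicate point is that computing the $\H^2$-norm of the term $\frac{\p(\mathcal{R}_0(\lambda)f)}{\p x_1}\frac{d\chi_0^\pm}{d x_1}$ requires one more $x_1$-derivative of $w$ than Lemma~\ref{lm1} states directly. This is settled by recalling from the proof of Lemma~\ref{lm1} that on $\Pi\setminus\Pi(-a-1,a+1)$ the function $w$ solves $(-\Delta-\lambda)w=-(f,\psi_0)_{L_2(\Pi)}\psi_0$ with a $C^\infty$ right-hand side and Dirichlet conditions on the flat horizontal boundary; local elliptic regularity on $\Pi(\ell_\pm-1,\ell_\pm+2)$ (resp.\ its mirror image) therefore upgrades the above bound to $\|w\|_{\H^3(\Pi(\ell_\pm,\ell_\pm+1))}\leqslant C\ell_\pm\,\mathrm{e}^{-\varrho\ell_\pm}\|f\|_{L_2(\Pi(-1,1))}$, the contribution of the right-hand side being even smaller since by (\ref{31}) $\psi_0$ decays like $\mathrm{e}^{-\sqrt{1-\lambda_0}\,x_1}$ and $\varrho\leqslant\sqrt{1-\lambda_0}$. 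Alternatively one repeats the Fourier-series computation of Lemma~\ref{lm1} one order higher: differentiating the series multiplies the $n$-th term by a power of $n$, which is absorbed by the factor $\mathrm{e}^{-\sqrt{n^2-\lambda_0}\,a}$ present because the Cauchy data of $w$ are taken at $x_1=\pm a$ with $a>0$ fixed, so neither the rate $\mathrm{e}^{-\varrho\ell_\pm}$ nor the polynomial prefactor changes.

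To conclude, since $\frac{d\chi_0^\pm}{d x_1}$ and $\frac{d^2\chi_0^\pm}{d x_1^2}$ are translates of fixed compactly supported $C^\infty$ functions, their Sobolev norms are bounded uniformly in $\ell$, so the Leibniz rule together with the $\H^3$-bound above gives $\big\|(\mathcal{R}_0(\lambda)f)\frac{d^2\chi_0^\pm}{d x_1^2}+2\frac{\p(\mathcal{R}_0(\lambda)f)}{\p x_1}\frac{d\chi_0^\pm}{d x_1}\big\|_{\H^2(\Pi)}\leqslant C\ell_\pm\,\mathrm{e}^{-\varrho\ell_\pm}\|f\|_{L_2(\Pi(-1,1))}$. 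Because $\mathcal{S}(\mp\ell_\pm)$ is the unitary translation carrying $\{\ell_+\leqslant x_1\leqslant\ell_++1\}$ onto $\{0\leqslant x_1\leqslant1\}\subset\Pi(-1,1)$ (and correspondingly for the minus sign), the operator $\mathcal{R}_\pm(\lambda,\ell)$ indeed maps $L_2(\Pi(-a-1,a+1))$ into $L_2(\Pi(-1,1))$ and satisfies the claimed inequality. The principal obstacle of the argument is exactly the regularity bookkeeping just described — passing from the $\H^2$-decay supplied by Lemma~\ref{lm1} to an $\H^2$-bound on the differentiated cut-off expression; everything else is a direct substitution.
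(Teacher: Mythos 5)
Your proposal is correct and takes essentially the same route as the paper, whose entire proof is the remark that the claim ``follows directly from the representation (\ref{4}) and Lemma~\ref{lm1}'': you substitute the decomposition (\ref{10}) into (\ref{4}), identify the rank-one pole term, and bound the remainder with (\ref{12}) using that the derivatives of $\chi_0^\pm$ are supported at distance $\sim\ell_\pm$ from the window and that the shift brings the result back into $\Pi(-1,1)$. Your extra step supplying local $\H^3$ control of $\mathcal{R}_0(\lambda)f$ (by elliptic regularity away from $\supp f$, or by one more order in the separation-of-variables computation) fills a derivative-counting detail the paper leaves implicit, and your only slip --- the derivatives of $\chi_0^-$ are supported in $[-\ell_-,-\ell_-+1]$ rather than $[-\ell_--1,-\ell_-]$ --- is immaterial to the argument.
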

The claim of the lemma follows directly from the representation (\ref{4}) and Lemma~\ref{lm1}. We shall need also a statement concerning embedded eigenvalues.

\begin{lemma}\label{lm6}
For any $\l\in[1/4,1]$ the boundary-value problem (\ref{222}), (\ref{2222}) with the vanishing right-hand side has only a trivial solution.
\end{lemma}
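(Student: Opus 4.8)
The plan is to prove that every solution $u$ of (\ref{222}), (\ref{2222}) with $g\equiv0$ vanishes identically, combining a flux identity, a transverse Poincar\'e estimate, and the unique continuation principle. Since $\lambda$ is real, $\Delta u\,\bar u-u\,\Delta\bar u=(\bar\lambda-\lambda)|u|^2\equiv0$, so Green's formula on a truncated strip $\Pi(-R,R)$ yields $\IM\int_{\p\Pi(-R,R)}\frac{\p u}{\p n}\,\bar u\,\di S=0$; the lateral contributions vanish because $u=0$ on $\gamma\cup\gamma_*$ and $\frac{\p u}{\p x_2}=0$ on $\Gamma_*$, leaving only the cross-sections $x_1=\pm R$. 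Letting $R\to\infty$ and using (\ref{2222}), the term at $x_1=-R$ tends to $0$ — for $\lambda\le1$ the first transverse Dirichlet mode $\sin x_2$ is non-propagating on the left half, hence $\frac{\p u}{\p x_1}$ decays exponentially there — while the term at $x_1=+R$ tends to $\frac{\pi}{2}\sqrt{\lambda-1/4}\,|\widehat C_+|^2$, the evanescent remainder $\Odr(\mathrm{e}^{-\RE\sqrt{9/4-\lambda}\,x_1})$ contributing nothing. Hence $\widehat C_+=0$ whenever $\lambda\in(1/4,1]$; for $\lambda\in(1/4,1)$ this makes $u$ decay exponentially as $x_1\to\pm\infty$, so $u\in\H^1(\Pi)$ with trace vanishing on $\gamma\cup\gamma_*$, and testing the weak form of (\ref{222}) with $v=u$ gives the identity $\int_\Pi|\nabla u|^2\,\di x=\lambda\int_\Pi|u|^2\,\di x$.

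Next I would localise the transverse spectral gap. Split $\Pi$ at $x_1=0$ into $\Pi_-:=\Pi\cap\{x_1<0\}$, where the transverse operator is Dirichlet--Dirichlet with lowest eigenvalue $1$, and $\Pi_+:=\Pi\cap\{x_1>0\}$, where it carries a Neumann condition at $x_2=0$ and a Dirichlet one at $x_2=\pi$, with eigenvalues $(n-\frac12)^2$, $n\ge1$. Since $\widehat C_+=0$, the scalar function $\phi(x_1):=\int_0^\pi u(x_1,x_2)\cos\frac{x_2}{2}\,\di x_2$ satisfies $\phi''+(\lambda-\frac14)\phi=0$ on $(0,\infty)$ — integrate by parts twice, noting that the Neumann and Dirichlet conditions kill every boundary term — and $\phi(x_1)=0$ for $x_1>a$ (only the modes $n\ge2$ being present in the half-strip there); by uniqueness for this ODE, $\phi\equiv0$ on $(0,\infty)$, i.e. $u(x_1,\cdot)\perp\cos\frac{x_2}{2}$ in $L_2(0,\pi)$ for every $x_1>0$. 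Poincar\'e's inequality in $x_2$ therefore gives $\int_0^\pi|\p_{x_2}u|^2\,\di x_2\ge\int_0^\pi|u|^2\,\di x_2$ on $\Pi_-$ and $\int_0^\pi|\p_{x_2}u|^2\,\di x_2\ge\frac94\int_0^\pi|u|^2\,\di x_2$ on $\Pi_+$, so the identity above reads $0=\int_{\Pi_-}(|\nabla u|^2-\lambda|u|^2)\,\di x+\int_{\Pi_+}(|\nabla u|^2-\lambda|u|^2)\,\di x$ with both summands non-negative, the second being at least $(\frac94-\lambda)\int_{\Pi_+}|u|^2\,\di x\ge\frac54\int_{\Pi_+}|u|^2\,\di x$. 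Hence $u\equiv0$ on $\Pi_+$; since $u$ solves the elliptic equation $(-\Delta-\lambda)u=0$ in the connected domain $\Pi$ and vanishes on the non-empty open set $\Pi_+$, unique continuation forces $u\equiv0$ on $\Pi$.

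At the two endpoints $\lambda=1/4$ and $\lambda=1$ the solution need not be square-integrable — a threshold term $\widehat C_+\cos\frac{x_2}{2}$, respectively $\widehat C_-\sin x_2$, may survive — so there one replaces the global energy identity by the truncated version $\int_{\Pi(-R,R)}(|\nabla u|^2-\lambda|u|^2)\,\di x=\RE\big(\int_0^\pi\p_{x_1}u\,\bar u\,\di x_2\big|_{x_1=R}-\int_0^\pi\p_{x_1}u\,\bar u\,\di x_2\big|_{x_1=-R}\big)$. Its right-hand side still tends to $0$ as $R\to\infty$ (the surviving threshold term is independent of $x_1$, so $\p_{x_1}u$ decays at both ends), and its left-hand side converges because the divergent transverse parts of the leading asymptotics integrate to zero over $(0,\pi)$. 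Splitting again at $x_1=0$, the strictly positive Poincar\'e gap is provided by $\Pi_-$ when $\lambda=1/4$ (gap $1-\frac14$, with $u\in L_2(\Pi_-)$ since $\lambda<1$) and by $\Pi_+$ when $\lambda=1$ (gap $\frac94-1$, using $\widehat C_+=0$ from the flux and $u\in L_2(\Pi_+)$), while the complementary half contributes a non-negative term; one again gets $u\equiv0$ on the good half and then, by unique continuation, everywhere.

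I expect the only real subtleties to be precisely this endpoint/limit bookkeeping — convergence of the truncated energy integrals and decay of the cross-section terms — together with a little care at the Dirichlet--Neumann corner of $\p\Pi$ at the origin, where $u$ is merely of class $\H^1$ and not smooth, so that one should carry the manipulations out in variational rather than pointwise form; the flux computation, the transverse Poincar\'e estimates and the unique continuation step itself are all routine.
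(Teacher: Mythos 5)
Your proposal is correct in substance, but it follows a genuinely different route from the paper. The paper disposes of all $\lambda\in[1/4,1]$ at once by a single Rellich--virial identity: it pairs the equation satisfied by $\partial u/\partial x_1$ against $x_1\overline{u}$ on truncated strips and passes to the limit, obtaining $-2\int_\Pi|\partial u/\partial x_1|^2\,\mathrm{d}x+\mathrm{i}\tfrac{\pi}{2}\sqrt{\lambda-\tfrac14}\,|\widehat C_+|^2=0$; separating real and imaginary parts gives simultaneously $\widehat C_+=0$ and $\partial u/\partial x_1\equiv0$, whence $u\equiv0$ with no case distinctions, no threshold bookkeeping, and no appeal to unique continuation. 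You instead combine the standard flux (Green) identity to kill the outgoing amplitude for $\lambda>1/4$, exact mode separation in the two half-strips plus transverse Poincar\'e gaps ($1$ on the Dirichlet--Dirichlet half, $9/4$ on the Neumann--Dirichlet half after projecting out $\cos\frac{x_2}{2}$), and then real-analyticity/unique continuation, treating the endpoints $\lambda=\tfrac14$ and $\lambda=1$ by truncated energy identities. This works: the only points needing care, which you largely flag, are (i) your parenthetical claim that $\phi(x_1)=0$ for $x_1>a$ is off --- no $a$ enters the auxiliary problem (\ref{222}), (\ref{2222}), and the asymptotics alone give only exponential smallness; the exact vanishing of the first-mode coefficient on $(0,\infty)$ follows from the exact ODE $c_1''+(\lambda-\tfrac14)c_1=0$ there together with the radiation condition, which is in effect what you then use; (ii) the decay of $\partial u/\partial x_1$ on the cross-sections (needed both in the flux identity at $x_1=-R$ and at the thresholds) requires differentiating the asymptotics, which is justified by separation of variables in the half-strips or local elliptic estimates; and (iii) the corner singularity $\sim r^{1/2}\cos\frac{\theta}{2}$ at the Dirichlet--Neumann junction must be excised or the identities run variationally, as you note. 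What your approach buys is modularity and transparency (the spectral-gap mechanism is explicit); what the paper's buys is brevity and uniformity across the whole interval, including both thresholds, without unique continuation.
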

\begin{proof}
Consider the rectangle $\Pi(-R_1,R_2)$ with fixed $R_1,\,R_2$. We multiply equation (\ref{222}) by $x_1\overline{u}$ and integrate it twice by parts in the region $\Pi(-R_1,R_2)$ taking into account the boundary conditions imposed on the function and its behavior at zero. It yields
\begin{equation}\label{16}
\begin{aligned}
0&=\int\limits_{\Pi(-R_1,R_2)}x_1\overline{u}\big(-\Delta-\lambda\big)
\frac{\p u}{\p x_1}\di x
=\int\limits_{\Pi(-R_1,R_2)}\frac{\p u}{\p x_1}(-\Delta-\lambda)
x_1\overline{u}\di x
\\&
+\int\limits_0^{\pi}\frac{\p u}{\p x_1}\Big|_{x_1=R_2}\frac{\p }
{\p x_1}x_1\overline{u}
\Big|_{x_1
=R_2}\di x_2
+R_1\int\limits_0^{\pi}\overline{u}(-R_1,x_2)
\frac{\p^2 u}{\p x_1^2}\Big|_{x_1=-R_1}\di x_2
\\&
+\int\limits_0^{\pi}\frac{\p u}{\p x_1}\Big|_{x_1=-R_1}
\frac{\p}{\p x_1}x_1\overline{u}\Big|_{x_1=-R_1}\di x_2
-R_2\int\limits_0^{\pi}\overline{u}(R_2,x_2)
\frac{\p^2 u}{\p x_1^2}\Big|_{x_1=R_2}\di x_2\,.
\end{aligned}
\end{equation}
Using separation of variables we represent the function $u$ for $x_1>0$ as
\begin{equation}\label{18}
\begin{aligned}
&u(x)=\widehat{C}_+\,\mathrm{e}^{\iu
\sqrt{\lambda-\frac{1}{4}}x_1}\cos
\frac{x_2}{2}+\widetilde{u}_+(x)\,,
\end{aligned}
\end{equation}
where the function $\widetilde{u}(x)$ decays exponentially and satisfies the relation
\begin{equation*}
\int\limits_0^{\pi}\widetilde{u}(x_1,x_2)\cos\frac{x_2}{2}\,\di x_2=0\,.
\end{equation*}
Next we find the limit of the expression (\ref{16}) as $R_1\to+\infty$ taking into account equation (\ref{222}), relation (\ref{18}), boundary conditions imposed on the function $u$ and its asymptotic behavior; we get
\begin{equation}\label{54}
\begin{aligned}
0=&\int\limits_{\Pi(-\infty,R_2)}x_1\overline{u}(-\Delta-\lambda)
\frac{\p u}{\p x_1}\di x
=-2\int\limits_{\Pi(-\infty,R_2)}
\Big|\frac{\p u}{\p x_1}\Big|^2\di x
\\+
&\int\limits_0^{\pi}\frac{\p u}{\p x_1} \frac{\p}{\p
x_1}x_1 \overline{u}\Big|_{x_1=R_2}\di x_2
-R_2\int\limits_0^{\pi}\overline{u}(R_2,x_2) \frac{\p^2 u}
{\p x_1^2} \Big|_{x_1=R_2}\di x_2
\\
=&\,\iu\frac{\pi}{2}\big|\widehat{C}_+\big|^2\sqrt{\lambda-\frac{1}{4}}
+\pi R_2\big|\widehat{C}_+\big|^2\Big(\lambda-\frac{1}{4}\Big)
\\
&
-2\int\limits_{\Pi(-\infty,R_2)}
\Big|\frac{\p u}{\p x_1}\Big|^2\di x
+\int\limits_0^{\pi}\left(\frac{\p \widetilde{u}}{\p x_1}
\frac{\p}{\p x_1}x_1 \overline{\widetilde{u}} -x_1 \overline{u}(\widetilde{x}) \frac{\p^2 \widetilde{u}}
{\p x_1^2}\right)\Big|_{x_1=R_2} \di x_2\,.
\end{aligned}
\end{equation}
It remains only to evaluate the integral over the region $\Pi(-\infty,R_2)$. To this aim we consider the region $\Pi(-\infty,0)$. Taking into account relation (\ref{18}) and the definition of the two regions, we find
\begin{align*}
&-2\!\!\!\!\!\int\limits_{\Pi(-\infty,R_2)}\Big|\frac{\p u}{\p x_1}\Big|^2\di x
=-\,2\!\!\!\int\limits_{\Pi(-\infty,0)}\Big|\frac{\p u}{\p x_1}\Big|^2\di x
-2\!\!\!\!\!\!\!\!\!\!\!\!\int\limits_{\Pi(-\infty,R_2)\backslash\Pi(-\infty,0)}\Big|
\frac{\p u}{\p x_1}\Big|^2\di x
\\&
=-\,2\!\!\!\int\limits_{\Pi(-\infty,0)}\Big|\frac{\p u}{\p x_1}\Big|^2\di x
-2\!\!\!\!\!\!\!\!\!\!\!\!\int\limits_{\Pi(-\infty,R_2)\backslash\Pi(-\infty,0)}\Big|
\frac{\p \widetilde{u}}{\p x_1}\Big|^2\di x
-\pi\big|\widehat{C}_+\big|^2\Big(\lambda-\frac{1}{4}\Big)R_2
\end{align*}
We substitute the last identity into (\ref{54}) and evaluating its limit as $R_2\to +\infty$, we get
\begin{equation*}
0=-\,2\!\!\!\int\limits_{\Pi(-\infty,0)}\Big|\frac{\p u}{\p x_1}\Big|^2\di x
-2\!\!\!\!\!\!\!\!\!\!\int\limits_{\Pi(-\infty,\infty)\backslash\Pi(-\infty,0)}\Big|
\frac{\p \widetilde{u}}{\p x_1}\Big|^2\di x
+\iu\frac{\pi}{2}\big|\widehat{C}_+\big|^2\sqrt{\lambda-\frac{1}{4}}\,.
\end{equation*}
This is equivalent to the following two relations
\begin{equation*}
\begin{aligned}
&2\!\!\!\int\limits_{\Pi(-\infty,0)}\Big|\frac{\p u}{\p x_1}\Big|^2\di x
+2\!\!\!\!\!\!\!\!\int\limits_{\Pi(-\infty,\infty)\backslash\Pi(-\infty,0)}\Big|
\frac{\p \widetilde{u}}{\p x_1}\Big|^2\di x=0\,,
\quad
\frac{\pi}{2}\big|\widehat{C}_+\big|^2\sqrt{\lambda-\frac{1}{4}}=0\,,
\end{aligned}
\end{equation*}
which imply that $u(x)$ is independent of $x$ and $\widehat{C}_+=0$; this allows us to conclude that the function $u(\cdot)$ is equal to zero identically.
\end{proof}

The idea of the proof of the last lemma is borrowed from Lemma~3.3 in \cite{DIB1}.

\begin{lemma}\label{lm3}
For any $f\in L_2(\Pi)$ with $\supp f \subseteq \Pi(-1,1)$ the operators $\mathcal{T}_\pm(\l,\ell)$ are uniformly bounded and satisfy the inequalities
\begin{equation}\label{37}
\begin{aligned}
&\|\mathcal{T}_\pm(\l,\ell)f\|_{\H^2\left(\Pi(-1,1)\right)}\leqslant
C \ell_\pm\, \mathrm{e}^{-\varrho
\ell_\pm}\|f\|_{L_2\left(\Pi(-1,1)\right)}
\end{aligned}
\end{equation}
with a constant $C$ independent of $f$, $\lambda$, and $\ell_\pm$.
\end{lemma}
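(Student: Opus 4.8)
The plan is to reduce the estimate to an exponential‑decay bound for the solution $u_\pm$ of the auxiliary problem (\ref{222}), (\ref{2222}) in the region where the cut‑off $\chi_\pm$ varies, a region which lies at distance $\approx\ell_\pm$ from the support of the right‑hand side; this is the analogue for the auxiliary problem of what Lemma~\ref{lm1} does for $\mathcal{H}_0$. By (\ref{4}) one has $\mathcal{T}_+ f=U_+\frac{d^2\chi_+}{dx_1^2}+2\frac{d\chi_+}{dx_1}\frac{\p U_+}{\p x_1}$ with $U_+=\mathcal{S}(\ell_+)u_+$, where $u_+$ solves (\ref{222}), (\ref{2222}) with right‑hand side $f$, and $\mathcal{T}_-$ is obtained from $\mathcal{T}_+$ by the reflection $x_1\mapsto-x_1$, so it suffices to treat $\mathcal{T}_+$. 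Since $\frac{d\chi_+}{dx_1}$ and $\frac{d^2\chi_+}{dx_1^2}$ are smooth, bounded together with all their derivatives, and supported in $[a-1,a]$, the product rule gives
\begin{equation*}
\|\mathcal{T}_+ f\|_{\H^2(\Pi(-1,1))}\leqslant C\|U_+\|_{\H^3(\Pi(a-1,a))}=C\|u_+\|_{\H^3(\Pi(a-1-\ell_+,\,a-\ell_+))}\,,
\end{equation*}
so everything reduces to bounding $u_+$ in the Sobolev norm $\H^3$ over a window of unit length situated near $x_1\approx-\ell_+$, that is, deep inside the half‑strip $x_1<-1$ where the source $f$ vanishes.

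For the decay estimate I would argue as for Lemmas~\ref{lm0} and~\ref{lm1}. Since $\supp f\subseteq\overline{\Pi(-1,1)}$, for $x_1\leqslant-1$ the function $u_+$ satisfies $(-\D-\l)u_+=0$ with Dirichlet conditions on both horizontal parts of the boundary, so by separation of variables $u_+(x_1,x_2)=\sum_{n\geqslant1}A_n\,\E^{\sqrt{n^2-\l}\,x_1}\sin nx_2$, the growing modes being absent because of the prescribed behaviour (\ref{2222}) at $-\infty$. Using $\RE\sqrt{n^2-\l}\geqslant\varrho$ for all $n\geqslant1$ and $\l$ near $\l_0$, a term‑by‑term integration — much simpler than in Lemma~\ref{lm1} since no factor $(\l-\l_0)^{-1}$ occurs — gives, for $b\geqslant2$,
\begin{equation*}
\|u_+\|_{L_2(\Pi(-\infty,-b))}\leqslant C\,\E^{-\varrho(b-2)}\|u_+(-2,\cdot)\|_{L_2(0,\pi)}\leqslant C\,\E^{-\varrho b}\|u_+\|_{\H^1(\Pi(-3,-1))}\,,
\end{equation*}
and, the equation being homogeneous there, interior elliptic regularity upgrades this to $\|u_+\|_{\H^3(\Pi(-b-1,-b+1))}\leqslant C\,\E^{-\varrho b}\|u_+\|_{\H^1(\Pi(-3,-1))}$ for $b\geqslant3$. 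Combining with the first paragraph, taking $b=\ell_+-a$, and using the a priori bound $\|u_+\|_{\H^1(\Pi(-3,-1))}\leqslant C\|f\|_{L_2(\Pi(-1,1))}$ for the solution of (\ref{222}), (\ref{2222}), one obtains $\|\mathcal{T}_+ f\|_{\H^2(\Pi(-1,1))}\leqslant C\,\E^{-\varrho\ell_+}\|f\|_{L_2(\Pi(-1,1))}\leqslant C\ell_+\,\E^{-\varrho\ell_+}\|f\|_{L_2(\Pi(-1,1))}$ for all $\ell_+$ large, which also yields the uniform boundedness; the bound for $\mathcal{T}_-$ follows by reflection.

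The main obstacle is the a priori bound invoked at the end of the second paragraph: one must know that for $\l$ in a fixed neighbourhood of $\l_0$ the auxiliary problem (\ref{222}), (\ref{2222}) is uniquely solvable, with a solution operator bounded from $L_2(\Pi(-1,1))$ into $\H^2$ of any fixed bounded subdomain and with constant uniform in $\l$. The point $\l_0$ lies in the continuous spectrum of the underlying operator — above the threshold $\frac14$ of the broad channel but below the threshold $1$ of the narrow one — so this is a limiting‑absorption statement; its crucial ingredient is precisely Lemma~\ref{lm6}, which excludes nontrivial solutions of the homogeneous radiating problem on $[\frac14,1]$, so that the Fredholm alternative together with holomorphy in $\l$ yields solvability and the usual compactness argument yields the uniform bound. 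Once this is granted, the remaining steps merely repeat the separation‑of‑variables and term‑by‑term estimates already carried out for Lemmas~\ref{lm0} and~\ref{lm1}.
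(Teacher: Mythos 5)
Your argument follows essentially the same route as the paper's (very brief) proof: the exponential factor comes from separation of variables in the half-strip where the equation is homogeneous, exactly as in the transformations used for Lemma~\ref{lm1}, while the uniform boundedness of the solution operator of the auxiliary radiation problem rests on Lemma~\ref{lm6} via a Fredholm/limiting-absorption argument, which is precisely what the paper delegates to the analogy with Lemmata~5.2, 5.3 of \cite{DIB2}. The one step you leave as a sketch (uniform solvability of (\ref{222}), (\ref{2222}) for $\l$ near $\l_0$) is treated at the same level of detail in the paper itself, so your proposal is consistent with, and somewhat more explicit than, the published proof.
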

\begin{proof}
The operator boundedness is checked by Lemma~\ref{lm6} in analogy with Lemmata~5.2, 5.3 of the paper \cite{DIB2}, and the bounds (\ref{37}) are derived using separation of variables and transformations we have used above in the proof of Lemma~\ref{lm1}.
\end{proof}

Let us return now to discussion of equations (\ref{5}). We introduce the Hilbert space
\begin{align*}
\mathfrak{L}:=\Bigg\{h=
\begin{pmatrix}
h_1
\\
h_2
\\
h_{3}
\end{pmatrix}\!,\:
h_{i}\in L_2(\mathbb{R}^{d};\mathbb{C}^{n}), \
i=1,2,3\Bigg\}
\end{align*}
with the scalar product
\begin{align*}
(u,v)_{\mathfrak{L}}=\sum\limits_{i=1}^{3}(u_{i},v_{i})_{L_2(\mathbb{R}^2)}\,.
\end{align*}
We denote
\begin{align*}
\begin{gathered}
g:=
\begin{pmatrix}
g_0
\\
g_+
\\
g_-
\end{pmatrix} \in \mathfrak{L}\,.
\end{gathered}
\end{align*}
Having introduced $\mathfrak{L}$ we consider the operator
\begin{equation*}
\mathcal{T}(\l,\ell):=
\begin{pmatrix}
0 & \mathcal{T}_+(\l,\ell) & \mathcal{T}_-(\l,\ell)
\\
\widetilde{\mathcal{T}}_+(\l,\ell) & 0 &  0
\\
\widetilde{\mathcal{T}}_-(\l,\ell) & 0 &  0
\end{pmatrix}
\end{equation*}
on this space. Using this notation we can rewrite equations (\ref{5}) as
\begin{equation*}
g+\mathcal{T}(\l,\ell)g=0
\end{equation*}
interpreting this relation as an operator equation in $\mathfrak{L}$. It can be analyzed using the appropriate version of Birman-Schwinger method described in the papers \cite{GRR, DIB2}, which makes it possible to rephrase the search for resonances of the problem (\ref{1}), (\ref{11}) to finding zeros of an explicitly given function.

In accordance with Lemma~\ref{lm2} we have the representation
\begin{equation}\label{7}
g-\frac{\big(g,\phi\big)_{\mathfrak{L}}}{\lambda_0-\lambda}\Psi_0
+\mathcal{R}(\lambda,\ell)g=0
\end{equation}
where
\begin{align*}
\begin{gathered}
\phi:=
\begin{pmatrix}
\psi_0
\\
0
\\
0
\end{pmatrix},
\quad \Psi_0:=
\begin{pmatrix}
0
\\
\mathcal{S}(-\ell_+)\big(\psi_0\frac{d^2\chi_0^+}{d x_1^2}
+2\frac{\p \psi_0}{\p x_1}
\frac{d \chi_0^+}{d x_1}\big)
\\
\mathcal{S}(\ell_-)\big(\psi_0\frac{d^2 \chi_0^-}{d x_1^2}
+2\frac{\p \psi_0}{\p x_1}
\frac{d \chi_0^-}{d x_1}\big)
\end{pmatrix},
\\
\mathcal{R}(\lambda,\ell):=
\begin{pmatrix}
0 & -\mathcal{T}_+ & -\mathcal{T}_-
\\
\mathcal{R}_+(\lambda,\ell) & 0 &  0
\\
\mathcal{R}_-(\lambda,\ell) & 0 &  0
\end{pmatrix} \in \mathfrak{L}\,.
\end{gathered}
\end{align*}
We note that, as a consequence of the definition of the vector $\Psi_0$ --- recall that $\psi_0$ is exponentially decaying
by Lemma~\ref{lm0}, we have the estimate
\begin{equation}\label{3.23}
\|\Psi_0\|_{\mathfrak{L}}=\Odr\big( \mathrm{e}^{-2\sqrt{1-\lambda_0}\,\ell_+}+
\mathrm{e}^{-2\sqrt{1-\lambda_0}\,\ell_-}
\big)\,.
\end{equation}
Putting together the first and the last term in (\ref{7}), we get
\begin{equation}\label{9}
\big(\I+\mathcal{R}(\lambda,\ell)\big)g-
\frac{\big(g,\phi\big)_{\mathfrak{L}}}{\lambda_0-\lambda}\Psi_0=0\,,
\end{equation}
where $\I$ is the unit operator. To analyze this equation, we need one more lemma; recall that we have introduced $\varrho=\min\Big\{\sqrt{1-\lambda_0}, \RE\sqrt{1-\lambda}\Big\}$.

\begin{lemma}\label{lm4}
For any $f\in\mathfrak{L}$ and $\lambda$ in the vicinity of $\lambda_0$ we have
\begin{equation*}
\big\|\mathcal{R}(\lambda,\ell)f\big\|_{\mathfrak{L}}\leqslant C
\Big(\ell_+\mathrm{e}^{-\varrho \ell_+}
+\ell_-\mathrm{e}^{-\varrho \ell_-}\Big)\|f\|_{\mathfrak{L}}
\end{equation*}
with a constant $C$ independent of $f$, $\lambda$, and $\ell_\pm$.
\end{lemma}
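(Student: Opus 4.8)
The statement is a direct estimate on the operator $\mathcal{R}(\lambda,\ell)$, so the plan is to unpack its block structure and bound each entry using the lemmas already established, then reassemble. Recall that
\begin{equation*}
\mathcal{R}(\lambda,\ell)=
\begin{pmatrix}
0 & -\mathcal{T}_+ & -\mathcal{T}_-
\\
\mathcal{R}_+(\lambda,\ell) & 0 &  0
\\
\mathcal{R}_-(\lambda,\ell) & 0 &  0
\end{pmatrix}.
\end{equation*}
For $f=(f_1,f_2,f_3)^{\mathrm{T}}\in\mathfrak{L}$ we have $(\mathcal{R}(\lambda,\ell)f)_1=-\mathcal{T}_+f_2-\mathcal{T}_-f_3$, $(\mathcal{R}(\lambda,\ell)f)_2=\mathcal{R}_+(\lambda,\ell)f_1$, and $(\mathcal{R}(\lambda,\ell)f)_3=\mathcal{R}_-(\lambda,\ell)f_1$. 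Here one must keep in mind that the components $f_i$ are genuinely supported in $\Pi(-1,1)$ (for the $\mathcal{T}_\pm$ arguments) respectively $\Pi(-a-1,a+1)$ (for the $\mathcal{R}_\pm$ argument), since the construction in Section~3 forces the $g_i$ to have exactly these supports; this is what allows the cited lemmas to apply with no extra work.

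The key steps are then routine. First, apply Lemma~\ref{lm3} to the first component: $\|\mathcal{T}_\pm f_2\|_{\H^2(\Pi(-1,1))}\leqslant C\ell_\pm\mathrm{e}^{-\varrho\ell_\pm}\|f_2\|_{L_2(\Pi(-1,1))}$, and likewise with $f_3$; dropping to the $L_2$-norm on the left and using the triangle inequality gives $\|(\mathcal{R}(\lambda,\ell)f)_1\|_{L_2}\leqslant C(\ell_+\mathrm{e}^{-\varrho\ell_+}+\ell_-\mathrm{e}^{-\varrho\ell_-})\|f\|_{\mathfrak{L}}$. Second, apply Lemma~\ref{lm2} to the remaining two components: $\|\mathcal{R}_\pm(\lambda,\ell)f_1\|_{\H^2(\Pi(-1,1))}\leqslant C\ell_\pm\mathrm{e}^{-\varrho\ell_\pm}\|f_1\|_{L_2(\Pi(-a-1,a+1))}$, so each of $\|(\mathcal{R}(\lambda,\ell)f)_2\|_{L_2}$ and $\|(\mathcal{R}(\lambda,\ell)f)_3\|_{L_2}$ is bounded by $C\ell_\pm\mathrm{e}^{-\varrho\ell_\pm}\|f\|_{\mathfrak{L}}$. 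Third, combine: since $\|\mathcal{R}(\lambda,\ell)f\|_{\mathfrak{L}}^2=\sum_{i=1}^3\|(\mathcal{R}(\lambda,\ell)f)_i\|_{L_2}^2$, summing the three bounds and taking the square root yields the claimed inequality, with a constant $C$ that is uniform in $\lambda$ near $\lambda_0$ and in $\ell_\pm$ because the constants in Lemmata~\ref{lm2} and~\ref{lm3} already are. The holomorphy of $\mathcal{R}(\lambda,\ell)$ in $\lambda$ near $\lambda_0$, needed later, is inherited entry-by-entry from Lemmata~\ref{lm2} and~\ref{lm3}.

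**Main obstacle.** There is essentially no obstacle: this lemma is purely a bookkeeping consequence of the previous two. The only point requiring a word of care is the matching of supports — one must check that when $f$ appears as the right-hand side $g$ of the operator equation~(\ref{9}), its components indeed lie in the function spaces for which Lemmata~\ref{lm2} and~\ref{lm3} were proved; this is immediate from the definitions of $g_0$, $g_\pm$ in Section~3. A second minor point is that $\varrho=\min\{\sqrt{1-\lambda_0},\RE\sqrt{1-\lambda}\}$ is the common decay rate appearing in both cited lemmas, so no reconciliation of exponents is needed. One could therefore state the proof in a single paragraph, invoking the two lemmas componentwise and summing.
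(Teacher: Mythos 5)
Your proposal is correct and coincides with the paper's own argument: the paper disposes of this lemma in one line, stating that it follows from Lemmata~\ref{lm2} and~\ref{lm3}, and your componentwise unpacking of the block matrix $\mathcal{R}(\lambda,\ell)$ followed by summation is precisely the spelled-out version of that reduction. Your remarks on support matching and on the common decay rate $\varrho$ are appropriate bookkeeping and introduce nothing beyond what the paper implicitly relies on.
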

The claim follows easily from Lemmata~\ref{lm2} and \ref{lm3}.

By the last lemma, the inverse $\big(\I+\mathcal{R}(\lambda,\ell)\big)^{-1}$ exists for $\ell_\pm$ large enough. In such a case we can apply it to both sides of equation (\ref{9}) obtaining
\begin{equation*}
g-\frac{\big(g,\phi\big)_{\mathfrak{L}}}{\lambda_0-\lambda}
\big(\I+\mathcal{R}(\lambda,\ell)\big)^{-1}\Psi_0=0\,.
\end{equation*}
Taking the scalar product of the two terms with the function $\phi$ defined above we get
\begin{equation*}
(g,\phi)_{\mathfrak{L}}-\frac{\big(g,\phi\big)_{\mathfrak{L}}}{\lambda_0-\lambda}
\Big(\big(\I+\mathcal{R}(\lambda,\ell)\big)^{-1}\Psi_0,\phi\Big)_{\mathfrak{L}}
=0\,,
\end{equation*}
or equivalently
\begin{equation*}
(g,\phi)_{\mathfrak{L}}\Big(\I-\frac{1}{\lambda_0-\lambda}
\Big(\big(\I+\mathcal{R}(\lambda,\ell)\big)^{-1}\Psi_0,\phi\Big)_{\mathfrak{L}}\Big)
=0\,.
\end{equation*}
The scalar product $(g,\phi)_{\mathfrak{L}}$ does not vanish, since otherwise we would have $g\equiv0$ and the eigenvalues $\lambda$ of the perturbed operator would be zero. We may thus  cancel the
 factor $(g,\phi)_{\mathfrak{L}}$ obtaining
\begin{equation}\label{8}
\lambda=\lambda_0-\Big(\big(\I+\mathcal{R}(\lambda,\ell)\big)^{-1}
\Psi_0,\phi\Big)_{\mathfrak{L}}\,.
\end{equation}

\section{The resonance asymptotics}

The aim of this section is, using the preliminaries discussed above, to prove that equation (\ref{8}) locally has  a unique solution and to construct the leading term in the resonance asymptotics of the perturbed problem (\ref{1}), (\ref{11}).

\begin{lemma}\label{lm5}
Equation (\ref{8}) has a unique solution.
\end{lemma}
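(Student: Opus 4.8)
The plan is to treat equation (\ref{8}) as a fixed-point problem for $\lambda$ in a small neighbourhood of $\lambda_0$ and apply the contraction mapping principle. Set
\[
F(\lambda):=\lambda_0-\Big(\big(\I+\mathcal{R}(\lambda,\ell)\big)^{-1}\Psi_0,\phi\Big)_{\mathfrak{L}}\,,
\]
so that solutions of (\ref{8}) are exactly the fixed points of $F$. First I would fix a closed disc $D_\delta:=\{\lambda:|\lambda-\lambda_0|\leqslant\delta\}$ with $\delta>0$ small but independent of $\ell$; by Lemma~\ref{lm4} the Neumann series for $(\I+\mathcal{R}(\lambda,\ell))^{-1}$ converges for all $\ell_\pm$ large and is uniformly bounded in operator norm on $D_\delta$, and it is holomorphic in $\lambda$ there because $\mathcal{R}(\lambda,\ell)$ is (this holomorphy being exactly what Lemmata~\ref{lm1}--\ref{lm3} provide, through the reduced resolvent). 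Combining this with the estimate (\ref{3.23}) for $\|\Psi_0\|_{\mathfrak L}$ and the Cauchy--Schwarz inequality in $\mathfrak{L}$ gives
\[
|F(\lambda)-\lambda_0|\leqslant C\|\Psi_0\|_{\mathfrak{L}}\|\phi\|_{\mathfrak{L}}
=\Odr\big(\mathrm{e}^{-2\sqrt{1-\lambda_0}\,\ell_+}+\mathrm{e}^{-2\sqrt{1-\lambda_0}\,\ell_-}\big)\,,
\]
which is $\leqslant\delta$ once $\min\{\ell_+,\ell_-\}$ is large enough; hence $F$ maps $D_\delta$ into itself.

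Next I would establish the contraction property. Since $F$ is holomorphic on $D_\delta$ and takes values in the much smaller disc of radius $\Odr(\mathrm{e}^{-2\varrho_0\ell})$ (writing $\varrho_0:=\sqrt{1-\lambda_0}$), a Cauchy estimate on $\partial D_\delta$ yields $|F'(\lambda)|\leqslant\delta^{-1}\sup_{D_\delta}|F-\lambda_0|=\Odr(\mathrm{e}^{-2\varrho_0\ell_+}+\mathrm{e}^{-2\varrho_0\ell_-})$ on $D_{\delta/2}$, in particular a Lipschitz constant $<1$ for all sufficiently large $\ell_\pm$. (Alternatively one differentiates $(\I+\mathcal{R})^{-1}=-(\I+\mathcal{R})^{-1}(\p_\lambda\mathcal{R})(\I+\mathcal{R})^{-1}$ directly and uses that $\p_\lambda\mathcal{R}(\lambda,\ell)$ obeys the same kind of exponentially small bound as $\mathcal{R}$ itself, which follows by the Cauchy integral formula for operator-valued holomorphic functions applied on $\partial D_\delta$.) By the Banach fixed-point theorem $F$ has a unique fixed point $\L_0(\ell)=\L_j(\ell)$ in $D_{\delta/2}$, and since $F(D_\delta)\subseteq D_{\delta/2}$ this fixed point is in fact unique in all of $D_\delta$; this proves the lemma.

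The main obstacle I anticipate is not the fixed-point mechanics, which are routine once the estimates are in place, but rather making sure the neighbourhood $D_\delta$ can be chosen uniformly in $\ell$ and that everything stays in the regime where the auxiliary results apply: the constant $\varrho=\min\{\sqrt{1-\lambda_0},\RE\sqrt{1-\lambda}\}$ entering Lemmata~\ref{lm1}--\ref{lm4} degenerates if $\RE\sqrt{1-\lambda}\to 0$, so one has to keep $\delta$ small enough (independently of $\ell$) that $\RE\sqrt{1-\lambda}\geqslant\tfrac12\sqrt{1-\lambda_0}$ throughout $D_\delta$, and also small enough that $\lambda_0$ is the only eigenvalue of $\mathcal{H}_0$ in $D_\delta$ so that the reduced-resolvent representation (\ref{10}) is valid. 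A secondary point to handle with care is that $\L_j(\ell)$ should genuinely be a resonance of (\ref{1}), (\ref{11}) and not an eigenvalue: this is where Lemma~\ref{lm6} and the argument preceding (\ref{8}) (non-vanishing of $(g,\phi)_{\mathfrak L}$, reconstruction of the solution $u_\ell$ via (\ref{3})) close the loop, and one should remark that the resulting $u_\ell$ is nontrivial because $g\neq0$.
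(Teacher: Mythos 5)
Your argument is correct, and it rests on exactly the same two ingredients as the paper's proof --- the exponential smallness of $\|\Psi_0\|_{\mathfrak L}$ from (\ref{3.23}), the uniform invertibility and smallness of $\mathcal{R}(\lambda,\ell)$ from Lemma~\ref{lm4}, and the analyticity in $\lambda$ of the resulting scalar function --- but it closes the argument by a different mechanism. The paper sets $z=\lambda-\lambda_0$, writes the equation as $z+G(z)=0$ with $G$ analytic and exponentially small on $|z|=z_0$, and applies Rouch\'e's theorem, concluding at once that there is exactly one \emph{simple} zero in the disc. You instead treat (\ref{8}) as a fixed-point equation, show $F$ maps a fixed disc $D_\delta$ into an exponentially small neighbourhood of $\lambda_0$, obtain a Lipschitz constant $\Odr(\delta^{-1}\mathrm{e}^{-2\sqrt{1-\lambda_0}\,\ell_\pm})<1$ via a Cauchy estimate, and invoke the Banach fixed-point theorem; the observation that every fixed point in $D_\delta$ automatically lies in the small image disc restores uniqueness on the full neighbourhood. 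Both routes are standard ways to localize a perturbed zero and are of comparable length; Rouch\'e gives the multiplicity-one statement for free (not needed for the lemma as stated), while your contraction scheme gives in addition a convergent iteration that could be used to generate the asymptotics, which the paper instead extracts by expanding $(\I+\mathcal{R})^{-1}$ in (\ref{25}). Your cautionary remarks --- choosing $\delta$ independent of $\ell$, keeping $\RE\sqrt{1-\lambda}$ bounded away from zero so that $\varrho$ does not degenerate, and ensuring $\lambda_0$ is the only eigenvalue of $\mathcal{H}_0$ in $D_\delta$ --- are precisely the implicit hypotheses under which the paper's own proof (and its appeal to analyticity of $G$) operates, so there is no gap.
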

\begin{proof}
Using $z=\lambda-\lambda_0$ in equation (\ref{8}) we get
\begin{equation*}
F(z):=z+
G(z)=0\,,
\end{equation*}
where
\begin{equation*}
G(z):=\Big(\big(\I+\mathcal{R}(\l_0+z,\ell)\big)^{-1}
\Psi_0,\phi\Big)_{\mathfrak{L}}\,.
\end{equation*}
The function $z\mapsto G(z)$ is analytic $|z|<z_0$, where $z_0$ is a sufficiently small positive number. In the absence of $G(z)$ we have at the left-hand side the identical function $z\mapsto z$ which has a unique simple zero at $z=0$. We are going to show that $G(z)$ is a small perturbation. It follows from (\ref{3.23}) and Lemma~\ref{lm4} that it satisfies the bound
\begin{equation*}
|G(z)|\leqslant C \big( \mathrm{e}^{-2\sqrt{1-\lambda_0}\,\ell_+}+
\mathrm{e}^{-2\sqrt{1-\lambda_0}\,\ell_-}
\big) \quad \text{for all} \quad |z|\leqslant z_0
\end{equation*}
with a constant $C$ independent of $z$ and $\ell_\pm$. It follows that
\begin{equation*}
|G(z)|
\leqslant C\Big(\ell_+\mathrm{e}^{-\varrho\ell_+}
+\ell_-\mathrm{e}^{-\varrho\ell_-}\Big)
< z_0
\end{equation*}
holds for $|z|=z_0$ as $\ell_\pm\to+\infty$ which makes it possible to apply to the  function $F(z)$ Rouch\'{e}'s theorem --- cf.~\cite[Sec.~IV.3]{M} --- by which it has just one simple root in the circle $|z|\leqslant z_0$.
\end{proof}

Equation (\ref{8}) can be rewritten in the form
\begin{equation}\label{25}
\begin{aligned}
\lambda-\lambda_0&=-\Big(\Big(\I-\mathcal{R}(\lambda,\ell) +\mathcal{R}^2(\lambda,\ell)\big
(\I+\mathcal{R}(\lambda,\ell)\big)^{-1}\Big)
\Psi_0,\phi\Big)_{\mathfrak{L}}
\\&
=-\big(\Psi_0,\phi\big)_{\mathfrak{L}} +\big(\mathcal{R}(\lambda,\ell)\Psi_0,\phi\big)_{\mathfrak{L}}
-\Big(\mathcal{R}^2(\lambda,\ell)\big
(\I+\mathcal{R}(\lambda,\ell)\big)^{-1}\Psi_0,\phi\Big)_{\mathfrak{L}}\,;
\end{aligned}
\end{equation}
for the first term at the right-hand side we get $\big(\Psi_0,\phi\big)_{\mathfrak{L}}=0$ using the definitions of the functions $\Psi_0,\,\phi$ and the scalar product in the space $\mathfrak{L}$.

Let $\L=\L(\ell)$ be a root of the equation (\ref{8}). In view of the boundedness of the operator $(\I+\mathcal{R}(\lambda,\ell)\big)^{-1}$, Lemma~\ref{lm4} and the definition of the function $\phi$ we have the inequality
\begin{equation}\label{13}
\big|\L-\lambda_0\big|
\leqslant
C\|\mathcal{R}(\L,\ell)\|=\Odr(\ell_+\mathrm{e}^{-\varrho\ell_+}
+\ell_-\mathrm{e}^{-\varrho\ell_-})
\end{equation}
with some constant $C$. Expanding now the resolvent $\mathcal{R}(\lambda,\ell)$ in relation (\ref{25}) into Taylor series, we get
\begin{equation*}
\L-\lambda_0
=\big(\mathcal{R}(\lambda_0,\ell)\Psi_0,\phi\big)_{\mathfrak{L}}
+(\lambda-\lambda_0)\big(\mathcal{R}^{'}(\lambda_0,\ell)\Psi_0,\phi\big)_{\mathfrak{L}}
+\Odr\Big(\|\mathcal{R}(\widetilde{\lambda},\ell)\|\|\Psi_0\|_{\mathfrak{L}}\Big)\,,
\end{equation*}
where $\widetilde{\lambda}$ is a point of the segment with the endpoints $\lambda_0,\,\L$ in the complex plane.
Taking into account the identity $\mathcal{R}'(\l_0,\ell)=\mathcal{R}^2(\l_0,\ell)$ --- cf.~formula (5.8) in \cite[Sec. I.5.2]{K} --- in combination with (\ref{13}), (\ref{3.23}) and Lemma~\ref{lm4}, we infer from the above relation that
\begin{equation}\label{eq4.2}
\L-\lambda_0
=\big(\mathcal{R}(\lambda_0)\Psi_0,\phi\big)_{\mathfrak{L}}
+\Odr\Big(\ell_+^2\mathrm{e}^{-3\sqrt{1-\l_0}\ell_+}
+\ell_-^2\mathrm{e}^{-3\sqrt{1-\l_0}\ell_-}\Big)\,.
\end{equation}
Next we evaluate the first term on the right-hand side, namely
\begin{equation}\label{eq4.3}
\begin{aligned}
\big(\mathcal{R}(\lambda_0)\Psi_0,\phi\big)_{\mathfrak{L}}
&=-\Big(\mathcal{T}_+\mathcal{S}(-\ell_+)\Big(\psi_0\frac{d^2 \chi_0^+}{d x_1^2}
+2\frac{\p \psi_0}{\p x_1}\frac{d \chi_0^+}{d x_1}\Big),\psi_0\Big)_{L_2(\Pi)}
\\&-\Big(\mathcal{T}_-\mathcal{S}(\ell_-)\Big(\psi_0\frac{d^2 \chi_0^-}{d x_1^2}
+2\frac{\p \psi_0}{\p x_1}\frac{d \chi_0^-}{d x_1}\Big),\psi_0\Big)_{L_2(\Pi)}\,.
\end{aligned}
\end{equation}
Using the definitions of the shift operator and the mollifiers together with the asymptotic behavior of the eigenfunctions, we get the relations
\begin{equation*}
\begin{aligned}
\mathrm{M}_\pm:=&-\int\limits_{\Pi}\mathcal{T}_\pm\mathcal{S}(\mp\ell_\pm)
\Big(\psi_0\frac{d^2 \chi_0^\pm}{d x_1^2}
+2\frac{\p \psi_0}{\p x_1}\frac{d \chi_0^\pm}{d x_1}\Big)\psi_0\di x
\\
=&-\mathrm{e}^{-\sqrt{1-\lambda_0}\,\ell_\pm}
\Psi_{0,1}^\pm
\int\limits_{\Pi}\psi_0\mathcal{T}_\pm\Big(\mathrm{e}^{\mp\sqrt{1-\lambda_0}\,x_1}\sin
x_2 \frac{d^2 \widetilde{\chi}_0^\pm}{d x_1^2}
\\&
\mp2\sqrt{1-\lambda_0}
\mathrm{e}^{\mp\sqrt{1-\lambda_0}\,x_1}\sin x_2\frac{d
\widetilde{\chi}_0^\pm}{d x_1} \Big)
+\Odr\big(\mathrm{e}^{-\sqrt{4-\lambda_0}\,\ell_\pm}\big)\,.
\end{aligned}
\end{equation*}
With the definition of $\mathcal{T}_\pm$ in view, the last relation takes the form
\begin{equation}\label{15}
\begin{aligned}
\mathrm{M}_\pm&=
-\Psi_{0,1}^\pm\mathrm{e}^{-\sqrt{1-\lambda_0}\,\ell_\pm}
\int\limits_{\Pi} \psi_0\mathcal{T}_\pm g_\pm\,\di x
+\Odr\big(\mathrm{e}^{-\sqrt{4-\lambda_0}\,\ell_\pm}\big)
\\&
=-\mathrm{e}^{-\sqrt{1-\lambda_0}\,\ell_\pm}
\Psi_{0,1}^\pm\int\limits_{\Pi}
\psi_0\Big(U_\pm\frac{d^2 \chi_\pm}{d x_1^2}
+2\frac{\p U_\pm}{\p x_1}\frac{d \chi_\pm}{d
x_1}\Big)\di x
+\Odr\big(\mathrm{e}^{-\sqrt{4-\lambda_0}\,\ell_\pm}\big)
\\&
=\mathrm{e}^{-\sqrt{1-\lambda_0}\,\ell_\pm}
\Psi_{0,1}^\pm
\int\limits_{\Pi}\big(\mathrm{A}_1^\pm-\mathrm{A}_2^\pm\big)\di x
+\Odr\big(\mathrm{e}^{-\sqrt{4-\lambda_0}\,\ell_\pm}\big)\,,
\end{aligned}
\end{equation}
where
\begin{equation*}
\begin{gathered}
g_\pm=\mathrm{e}^{\mp\sqrt{1-\lambda_0}\,x_1}\sin x_2
\frac{d^2 \widetilde{\chi}_0^\pm}{d x_1^2}
\mp2\sqrt{1-\lambda_0}
\mathrm{e}^{\mp\sqrt{1-\lambda_0}\,x_1}\sin x_2\frac{d \widetilde{\chi}_0^\pm}{d x_1}\,,
\\[.2em]
\mathrm{A}_1^\pm:=\psi_0(-\Delta-\lambda)(\chi_\pm U_\pm)\,,
\quad U_+=S(\ell_+)v_+\,,
\\[.5em]
\mathrm{A}_2^\pm:=\psi_0\chi_\pm(-\Delta-\lambda)U_\pm\,,
\quad U_-=S(-\ell_-)v_-\,,
\end{gathered}
\end{equation*}
and $v_+(x)$, $v_-(-x_1,x_2)$ are solutions of the problem (\ref{222}), (\ref{2222}) with the right-hand sides $g_+(x)$, $g_-(-x_1,x_2)$, respectively. Integrating now the expressions for $\mathrm{A}_1^\pm$ over the rectangle
$\Pi(-R,R)$ with a fixed $R$, using integration by parts and taking into account the asymptotic behavior of the functions $U_\pm$ as $r\to 0$, we get 
\begin{align*}
\int\limits_{\Pi(-R,R)} \mathrm{A}_1^\pm\di x =&-\int\limits_0^{\pi}
\psi_0(-R,x_2)\frac{\p}{\p
x_1}\chi_\pm U_\pm\Big|_{x_1=-R}\di x_2
\\
&-\int\limits_{-R}^{R}
\chi_\pm(x_1)U_\pm(x_1,0)\frac{\p \psi_0}{\p
x_2}\Big|_{x_2=0}\di x_1
\\& 
-\int\limits_0^{\pi}\psi_0(R,x_2)\frac{\p}{\p
x_2}\chi_\pm U_\pm\Big|_{x_1=R}\di x_2
\\
&+\int\limits_{-R}^{R}\psi_0(x_1,0)\frac{\p}{\p
x_2}\chi_\pm U_\pm\Big|_{x_2=0}\di x_1
\\& 
+\int\limits_0^{\pi}
\chi_\pm(-R)U_\pm(-R,x_2)\frac{\p \psi_0}{\p
x_1}\Big|_{x_1=-R}\di x_2
\\
&+
\int\limits_{\Pi(-R,R)}\chi_\pm U_\pm(-\Delta-\lambda)\psi_0\di x\,.
\end{align*}
Now we evaluate the limit of the last expression as $R\to +\infty$, take into account the equation satisfied by the eigenfunction  $\psi_0$, boundary conditions imposed on functions $\psi_0$, $U_\pm$, the definition of the cut-off function $\chi_+$, and the asymptotic behavior of the eigenfunction $\psi_0$; this yields the relations
\begin{align}\label{14}
\int\limits_{\Pi}\mathrm{A}_1^+\di x=\int\limits_{a}^{+\infty}U_+(x_1,0)\frac{\p
\psi_0}{\p x_2}\Big|_{x_2=0}\di x_1\,,
\quad
\int\limits_{\Pi}\mathrm{A}_1^-\di x=\int\limits_{-\infty}^{-a}U_-(x_1,0)\frac{\p
\psi_0}{\p x_2}\Big|_{x_2=0}\di x_1\,.
\end{align}
In the same way we find
\begin{align*}
&-\int\limits_{\Pi}\mathrm{A}_2^+\di x=
\int\limits_{-a}^{a}\psi_0(x_1,0)\frac{\p
U_+}{\p x_2}\Big|_{x_2=0}\di x_1-
\int\limits_{a}^{+\infty}U_+(x_1,0)\frac{\p
\psi_0}{\p
x_2}\Big|_{x_2=0}\di x_2\,,
\\&
-\int\limits_{\Pi}\mathrm{A}_2^-\di x=
\int\limits_{-a}^{a}\psi_0(x_1,0)\frac{\p
U_-}{\p x_2}\Big|_{x_2=0}\di x_1-
\int\limits_{-\infty}^{-a}U_-(x_1,0)\frac{\p
\psi_0}{\p
x_2}\Big|_{x_2=0}\di x_2\,.
\end{align*}
Substituting now the obtained expressions into (\ref{15}), having in mind the definitions of the functions $U_\pm$ and the asymptotic behavior of $v_\pm$, we arrive at the relation
\begin{equation}\label{17}
\begin{aligned}
\mathrm{M}_\pm&=
\mathrm{e}^{-\sqrt{1-\lambda_0}\,\ell_\pm}
\Psi_{0,1}^\pm
\int\limits_{-a}^{a}\psi_0(x_1,0)\frac{\p U_\pm}{\p
x_2}\Big|_{x_2=0}\di x_1
+\Odr\big(\mathrm{e}^{-\sqrt{4-\lambda_0}\,\ell_\pm}\big)
\\&
=\mathrm{e}^{-\sqrt{1-\lambda_0}\,\ell_\pm}
\Psi_{0,1}^\pm\int\limits_{-a}^{a}
\psi_0(x_1,0)\frac{\p }{\p
x_2}\mathcal{S}(\pm\ell_\pm)v_\pm\Big|_{x_2=0}\di x_1
+\Odr\big(\mathrm{e}^{-\sqrt{4-\lambda_0}\,\ell_\pm}\big)
\\&
=\widehat{C}_-(g_\pm)\,\mathrm{e}^{-2\sqrt{1-\lambda_0}\,\ell_\pm}
\Psi_{0,1}^\pm\int\limits_{-a}^{a}
\psi_0(x_1,0)\mathrm{e}^{\pm\sqrt{1-\lambda_0}\,x_1}\di x_1
+\Odr\big(\mathrm{e}^{-\sqrt{4-\lambda_0}\,\ell_\pm}\big)\,,
\end{aligned}
\end{equation}
where the constants $\widehat{C}_-(g_+)$ are determined by the asymptotics (\ref{2222}) of the functions $v_\pm$ as $x_1\to+\infty$. To evaluate the integral on the right-hand side of the last expression, we multiply the equation for the eigenfunction $\psi_0$ by $\mathrm{e}^{\pm\sqrt{1-\l_0}\,x_1}\sin x_2$ and integrate by parts over the region $\Pi(-R,R)$, obtaining
\begin{equation*}
\begin{aligned}
0&
=\int\limits_{\Pi(-R,R)}\mathrm{e}^{\pm\sqrt{1-\lambda_0}\,x_1}\sin
x_2 (-\Delta-\lambda_0)\psi_0\di x
=\int\limits_0^{\pi}\mathrm{e}^{\mp\sqrt{1-\lambda_0}\,R}\sin
x_2 \frac{\p \psi_0}{\p x_1}\Big|_{x_1=-R}\di x_2
\\&
-\int\limits_0^{\pi}\mathrm{e}^{\pm\sqrt{1-\lambda_0}\,R}\sin
x_2 \frac{\p \psi_0}{\p x_1}\Big|_{x_1=R}\di x_2
\mp\sqrt{1-\lambda_0}\int\limits_0^{\pi}\psi_0(-R,x_2)\,
\mathrm{e}^{\mp\sqrt{1-\lambda_0}\,R}\sin x_2\di x_2
\\&
-\int\limits_{-R}^{R}\psi_0(x_1,0)\,
\mathrm{e}^{\pm\sqrt{1-\lambda_0}\,x_1}\di x_1
\pm\sqrt{1-\lambda_0}\int\limits_0^{\pi}\psi_0(R,x_2)\,
\mathrm{e}^{\pm\sqrt{1-\lambda_0}\,R}\sin x_2\di x_2\,.
\end{aligned}
\end{equation*}
Evaluating now the limit of the last expression as $R\to+\infty$ and taking into account the asymptotic behavior of the eigenfunction, we find
\begin{equation*}
\int\limits_{-a}^{a}\psi_0(x_1,0)\,
\mathrm{e}^{\pm\sqrt{1-\lambda_0}x_1}\di x_1=-
\pi\sqrt{1-\lambda_0}\,\Psi_{0,1}^\pm\,.
\end{equation*}
In view of the even/odd character of the eigenfunction $\psi_0$ and the obvious identity $|\Psi_0|=|\Psi_{0,1}^\pm|$, the relation (\ref{17}) acquires the form
\begin{equation*}
\mathrm{M}_\pm=-\pi
\widehat{C}_-(g_\pm)\sqrt{1-\lambda_0}\big|\Psi_{0,1}^\pm\big|^2
\mathrm{e}^{-2\sqrt{1-\lambda_0}\,\ell_\pm}
+\Odr\big(\mathrm{e}^{-\sqrt{4-\lambda_0}\,\ell_\pm}\big)\,.
\end{equation*}
It remains to determine the complex constants $\widehat{C}_-(g_\pm)$. To this aim we represent the right-hand sides of the equations satisfied by $v_\pm$ as
\begin{equation*}
g_\pm=\big(-\Delta-\lambda_0\big)\big(-\widetilde{\chi}_0^\pm
\mathrm{e}^{\mp\sqrt{1-\lambda_0}x_1}
\sin x_2\big)\,.
\end{equation*}
From here it is easy to see that functions $v_\pm$ can be written as
\begin{equation*}
v_\pm(x)=-\widetilde{\chi}_0^\pm\,\mathrm{e}^{\mp\sqrt{1-\lambda_0}\,x_1}
\sin x_2+V_\pm(x)\,,
\end{equation*}
where $V_+(x)=V_j(x)$, $\:V_-(x)=V_j(-x_1,x_2)$, and $V_j(x)$ are solutions to the problem (\ref{49})--(\ref{38}). The obtained representation of $v_\pm$ implies
\begin{equation*}
\widehat{C}_-(g_\pm)=k_j^-\,,
\end{equation*}
where the complex constant $k_j^-$ is nonzero as we shall demonstrate below.

In this way we have proved the relations
\begin{align*}
\Bigg(\mathcal{T}_\pm\mathcal{S}(\mp\ell_\pm)\Big(\psi_0\frac{d^2
\chi_0^\pm}{d x_1^2} +2\frac{\p \psi_0}{\p
x_1}\frac{d \chi_0^\pm}{d
x_1}\Big),\psi_0\Bigg)_{L_2(\Pi)} &=-\pi
k_j^-\sqrt{1-\lambda_0}\big|\Psi_0\big|^2
\mathrm{e}^{-2\sqrt{1-\lambda_0}\,\ell_\pm}
\\&
+\Odr\big(\mathrm{e}^{-\sqrt{4-\lambda_0}\,\ell_\pm}\big)\,;
\end{align*}
substituting it into (\ref{eq4.3}) and taking (\ref{eq4.2}) into account, we get
\begin{align*}
\L=&\:\lambda_0-k_j^-\pi \sqrt{1-\lambda_0}|\Psi_0|^2
\Big( \mathrm{e}^{-2\sqrt{1-\lambda_0}\,\ell_+}+
\mathrm{e}^{-2\sqrt{1-\lambda_0}\,\ell_-}\Big)
\\[.3em] &
+\Odr\Big(\ell_+^2\mathrm{e}^{-2
\sqrt{1-\lambda_0}\,\ell_+} +\ell_-^2\mathrm{e}^{-2
\sqrt{1-\lambda_0}\,\ell_-}
\Big).
\end{align*}
Next we are going to prove relations (\ref{52}) and to check that $\beta_j\not=0$. To this aim we multiply equation (\ref{49})
by $\overline{V}_\pm$ and integrate it twice by parts over the region $\Pi(-R,R)\backslash w_{\varepsilon}$,
where $w_{\varepsilon}$ is the semicircle of radius $\varepsilon$ centered at the origin of coordinates. This yields
\begin{align*}
0&=\int\limits_{\Pi(-R,R)\backslash w_{\varepsilon}}\overline{V}_\pm
\big(\Delta+\lambda_0\big)\frac{\p}{\p x_1}V_\pm\di x=
-\int\limits_0^{\pi}\overline{V}_\pm\frac{\p^2}{\p x_1^2}V_\pm\Big|_{x_1=-R}\di x_2
\\&
+\int\limits_0^{\pi}\overline{V}_\pm\frac{\p^2}{\p x_1^2}V_\pm\Big|_{x_1=R}\di x_2
-\int\limits_0^{\pi}\frac{\p}{\p x_1}\overline{V}_\pm
\frac{\p}{\p x_1}V_\pm\Big|_{x_1=-R}\di x_2
+\int\limits_0^{\pi}\frac{\p}{\p x_1}\overline{V}_\pm
\frac{\p}{\p x_1}V_\pm\Big|_{x_1=R}\di x_2
\\&
-\int\limits_{\p w_{\varepsilon}}\overline{V}_\pm\frac{\p^2}{\p r \p x_1}V_\pm\di s
+\int\limits_{\p w_{\varepsilon}}\frac{\p}{\p r}V_\pm\frac{\p}{\p x_1}\overline{V}_\pm \di s\,.
\end{align*}
Evaluating each of the obtained integrals and passing to the limits $R\to+\infty$ and $\varepsilon\to 0$,
we find
\begin{align}\label{51}
\RE
k_j^-=\frac{1}{2}\big|k_j^+\big|^2\frac{\lambda_0-\frac{1}{4}}{1-\lambda_0}
-\frac{\big|\beta_j\big|^2}{8(\lambda_0-1)}\,.
\end{align}
The real and imaginary parts of the functions $V_\pm$ obviously solve the problem (\ref{49}). Their behavior as  $x_1\to\pm\infty$ and $r\to 0$ is obtained easily from the asymptotics (\ref{36}), (\ref{38}) taking their real and imaginary parts, respectively. In analogy with the above reasoning, integrating twice by parts in the identities
\begin{equation*}
\begin{gathered}
0=\int\limits_{\Pi(-R,R)\backslash w_{\varepsilon}}
\IM V_\pm
\big(\Delta+\lambda_0\big)\frac{\p}{\p x_1}\IM V_\pm\di x\,,
\\
0=\int\limits_{\Pi(-R,R)\backslash w_{\varepsilon}}
\RE V_\pm
\big(\Delta+\lambda_0\big)\frac{\p}{\p x_1}\IM V_\pm\di x\,,
\end{gathered}
\end{equation*}
we arrive at the following formul{\ae}
\begin{align}
&
\big|k_j^+\big|^2=\frac{\IM^2\beta_j}{4\big(\lambda_0-\frac{1}{4}\big)}\,,
\label{56}
\\
&
\IM
k_j^-=\frac{1}{2}\big|k_j^+\big|^2\sqrt{\frac{\lambda_0-\frac{1}{4}}{1-\lambda_0}}\,;
\label{55}
\end{align}
substituting now (\ref{56}) into (\ref{55}) and (\ref{51}) we get (\ref{52}). It remains to prove that the constants $\IM
k_j^-$ and $\Psi_{j}$ do not vanish. This is the contents of the following lemmata.
\begin{lemma}\label{lm7}
The coefficient $\Psi_{j}$ in the asymptotics (\ref{31}) is nonzero.
\end{lemma}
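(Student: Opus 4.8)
The plan is to argue by contradiction: assume $\Psi_j=0$ and show that the eigenfunction $\psi_j$ must then vanish identically, contradicting the fact that $\lambda_j$ is an eigenvalue. The key observation is that $\Psi_j=\Psi_{j,1}^+$ is the coefficient of the slowest-decaying transverse mode $\E^{-\sqrt{1-\l_j}\,x_1}\sin x_2$ in the expansion (\ref{29}) of $\psi_j$ on the right half-strip; by the even/odd symmetry in $x_1$ the corresponding left-side coefficient $\Psi_{j,1}^-$ differs only by a sign, so $\Psi_j=0$ forces \emph{both} of them to vanish. The idea is then to exploit a Green-type identity, testing the eigenvalue equation $(-\Delta-\lambda_j)\psi_j=0$ against the exponentially \emph{growing} mode $\E^{+\sqrt{1-\l_j}\,x_1}\sin x_2$ (and its mirror image), which is still an admissible test function after we integrate over a finite rectangle $\Pi(-R,R)$ and send $R\to\infty$, precisely because the leading $n=1$ term of $\psi_j$ is absent and all surviving terms decay at rate at least $\sqrt{4-\l_j}$, overpowering the growth $\sqrt{1-\l_j}$.

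Concretely, first I would write, as in (\ref{29}), the decomposition $\psi_j(x)=\sum_{n\ge 2}\Psi_{j,n}^\pm\E^{\mp\sqrt{n^2-\l_j}\,x_1}\sin nx_2$ for $\pm x_1>a$ under the assumption $\Psi_{j,1}^\pm=0$. Then, integrating $\E^{\pm\sqrt{1-\l_j}\,x_1}\sin x_2\,(-\Delta-\lambda_j)\psi_j=0$ twice by parts over $\Pi(-R,R)$ exactly as in the computation preceding (\ref{17}), the boundary terms at $x_1=\pm R$ now all tend to zero as $R\to\infty$: each such term carries a factor $\E^{\mp\sqrt{n^2-\l_j}\,R}$ with $n\ge 2$ against $\E^{\pm\sqrt{1-\l_j}\,R}$, hence decays like $\E^{-(\sqrt{4-\l_j}-\sqrt{1-\l_j})R}$. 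What remains is the interior boundary integral over $\Gamma_a\cup\g_a\cup\g$; on $\g$ and on the Dirichlet parts $\g_a$ the trace of $\psi_j$ vanishes, while on the Neumann window $\Gamma_a$ the normal derivative $\p\psi_j/\p x_2|_{x_2=0}$ vanishes, so that residual term vanishes as well. This yields the orthogonality relation $\int_{-a}^{a}\psi_j(x_1,0)\,\E^{\pm\sqrt{1-\l_j}\,x_1}\,\di x_1=0$ — the analogue of the nonvanishing identity $-\pi\sqrt{1-\l_j}\,\Psi_{j,1}^\pm$ derived just before the statement of Lemma~\ref{lm7}, but now with right-hand side $0$ because $\Psi_{j,1}^\pm=0$.

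Having both relations $\int_{-a}^{a}\psi_j(x_1,0)\,\E^{\pm\sqrt{1-\l_j}\,x_1}\di x_1=0$, I would add and subtract them to get $\int_{-a}^a\psi_j(x_1,0)\cosh(\sqrt{1-\l_j}\,x_1)\di x_1=0$ and $\int_{-a}^a\psi_j(x_1,0)\sinh(\sqrt{1-\l_j}\,x_1)\di x_1=0$; combined with the parity of $\psi_j$ one of these is automatic and the other says $\psi_j(\cdot,0)\equiv 0$ on $\Gamma_a$ in a weak sense is \emph{not} yet forced — so instead the cleaner route is to iterate: repeat the integration by parts with the test functions $\E^{\pm\sqrt{n^2-\l_j}\,x_1}\sin nx_2$ for every $n\ge 2$, obtaining $\Psi_{j,n}^\pm\equiv 0$ for all $n$ by the same decay-versus-growth cancellation (the analogue of multiplying by higher modes), whence $\psi_j\equiv 0$ on $\Pi\setminus\Pi(-a,a)$, and then by unique continuation for $-\Delta-\lambda_j$ across the real-analytic interior, $\psi_j\equiv 0$ on all of $\Pi$ — the desired contradiction. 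The main obstacle is making the boundary-term estimates at $x_1=\pm R$ fully rigorous in the $\H^1$-sense rather than pointwise: one must invoke Lemma~\ref{lm0} (the summability $\sum n|\Psi_{j,n}^\pm|^2<\infty$) to control the tail of the series uniformly, exactly as was done with the coefficients $W_{0,n}^\pm$ in the proof of Lemma~\ref{lm1}; once that bookkeeping is in place, every step is a routine repetition of integration-by-parts arguments already carried out above.
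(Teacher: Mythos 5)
Your first step coincides with the paper's: assuming $\Psi_j=0$, testing $(-\Delta-\lambda_j)\psi_j=0$ against $\E^{\pm\sqrt{1-\l_j}\,x_1}\sin x_2$ over $\Pi(-R,R)$ and letting $R\to\infty$ does give the two window relations $\int_{-a}^{a}\psi_j(x_1,0)\,\E^{\pm\sqrt{1-\l_j}\,x_1}\di x_1=0$, i.e.\ (\ref{21}). The gap is in your substitute for the second step. The proposed iteration with the growing test functions $\E^{\pm\sqrt{n^2-\l_j}\,x_1}\sin nx_2$, $n\geqslant 2$, does not yield $\Psi_{j,n}^\pm=0$: by orthogonality of the transverse modes, the boundary integrals at $x_1=\pm R$ only see the $n$-th longitudinal mode of $\psi_j$, which decays at exactly the rate $\sqrt{n^2-\l_j}$ at which the test grows. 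The resulting Wronskian-type term is therefore an $R$-independent constant proportional to $\Psi_{j,n}^\pm$ (not $\Odr(\E^{-(\sqrt{4-\l_j}-\sqrt{1-\l_j})R})$ as you claim), and the limit merely reproduces the tautological identity $\int_{-a}^{a}\psi_j(x_1,0)\,\E^{\pm\sqrt{n^2-\l_j}\,x_1}\di x_1=-\pi\sqrt{n^2-\l_j}\,\Psi_{j,n}^\pm$, the exact analogue of the $n=1$ formula derived before the lemma. Nothing forces either side to vanish, so you never conclude $\psi_j\equiv 0$ outside $\Pi(-a,a)$, and the unique-continuation finish has nothing to act on. As you yourself note, the two relations (\ref{21}) alone do not force $\psi_j$ to vanish on the window, so after your pivot the proof is genuinely incomplete.

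The missing idea in the paper is an auxiliary-function construction that converts (\ref{21}) into a positivity statement. For $\psi_j$ even one sets $\Phi_j(x):=\sinh\bigl(\sqrt{1-\l_j}\,x_1\bigr)\int_0^{x_1}\psi_j(t,x_2)\cosh\bigl(\sqrt{1-\l_j}\,t\bigr)\di t$ (with a $\cosh/\sinh$ variant and an extra factor $t$ in the odd case). Precisely because of (\ref{21}), $\Phi_j$ satisfies the same Dirichlet/Neumann boundary conditions as $\psi_j$, it solves $(\Delta+\l_j)\Phi_j=2\sqrt{1-\l_j}\cosh\bigl(\sqrt{1-\l_j}\,x_1\bigr)\psi_j$, and it grows at most like $\E^{\sqrt{1-\l_j}|x_1|}$, while $\psi_j$ (with $\Psi_j=0$) decays like $\E^{-\sqrt{4-\l_j}|x_1|}$; hence the Green identity between $\Phi_j$ and $\psi_j$ has vanishing boundary terms at infinity and yields $0=2\sqrt{1-\l_j}\int_\Pi\psi_j^2\cosh\bigl(\sqrt{1-\l_j}\,x_1\bigr)\di x\geqslant 0$, forcing $\psi_j\equiv0$, the desired contradiction. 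You would need to supply this (or an equivalent) second ingredient; the iteration over higher modes cannot replace it.
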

\begin{proof}
We use \emph{reductio ad absurdum}; we shall suppose that $\Psi_{j}$ vanishes. We multiply the equation in (\ref{49}) by $\mathrm{e}^{\pm\sqrt{1-\lambda_{0}}\,x_{1}}\sin x_{2}$ and integrate twice by parts over $\Pi(-R,R)$. Then we take into account the asymptotics (\ref{31}) with $\Psi_j=0$ and pass to the limit $R\to\infty$ obtaining
\begin{equation}\label{21}
\int\limits_{-a}^{a}\mathrm{e}^{\pm\sqrt{1-\lambda_{0}}\,x_{1}} \psi_{j}(x_{1},0)\,\di x_{1}=0\,.
\end{equation}
Let the function $\psi_{j}$ be even. We define one more function, namely
\begin{equation*}
\Phi_{j}(x):=\sinh\sqrt{1-\lambda_{0}}\,x_{1}
\int\limits_{0}^{x_{1}}\psi_{j}(t,x_{2})\cosh\sqrt{1-\lambda_{0}}\,t\,\di t\,.
\end{equation*}
By (\ref{21}) the function $\Phi_{j}$ satisfies the boundary conditions
\begin{equation*}
\Phi_{j}=0 \quad \text{on} \;\; \gamma\cup\gamma_{a}\,,\quad
\frac{\p \Phi_{j}}{\p x_{2}}=0 \quad \text{on} \;\;
\Gamma_{a}\,,
\end{equation*}
and the equation which is easy to derive from (\ref{49}),
\begin{equation}\label{4.11a}
(\Delta+\lambda_{0})\Phi_{j}=2\sqrt{1-\lambda_{0}}\cosh\sqrt{1-\lambda_{0}}\,x_{1}\,\psi_{j}
\quad\text{in}\;\;\Pi\,.
\end{equation}
It is also clear that $\Phi_j\in\H^2(\Pi(-R,R))$ holds for any $R>0$.
From (\ref{31}) with $\Psi_j=0$ and the definition of the function $\Phi_j$ it follows that
\begin{equation*}
\Phi_j(x)=\Odr(\E^{\sqrt{1-\l_0}|x_1|})\,,\quad \frac{\Phi_j}{\p x_1}(x)=\Odr(\E^{\sqrt{1-\l_0}|x_1|})\quad \text{as}\;\;|x_1|\to\infty\,.
\end{equation*}
Now we multiply equation (\ref{4.11a}) by $\psi_j$, integrate twice by parts over $\Pi(-R,R)$ and pass to the limit $R\to+\infty$ taking into account the last asymptotics and (\ref{31}) with $\Psi_j=0$. This yields
\begin{equation*}
0
=2\sqrt{1-\lambda_{0}}\int\limits_{\Pi}\psi_j^2(x)
\cosh\sqrt{1-\lambda_{0}}\,x_{1}\,\di x\geqslant 0\,,
\end{equation*}
implying that $\psi_j(x)\equiv0$ which is impossible.

In case of an odd $\psi_j$ we proceed in the analogous way replacing the auxiliary function $\Phi_j$ by
\begin{equation*}
\Phi_{j}(x):=\cosh\sqrt{1-\lambda_{0}}x_{1}
\int\limits_{0}^{x_{1}}t\psi_{j}(t,x_{2})\sinh\sqrt{1-\lambda_{0}}\,t\,\di t\,,
\end{equation*}
arriving again at the absurd conclusion $\psi_j(x)\equiv0$.
\end{proof}

\begin{lemma}\label{lm8}
The constant $k_{j}^{+}$ in relation (\ref{36}) is nonzero.
\end{lemma}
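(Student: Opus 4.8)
I would prove this by \emph{reductio ad absurdum}, assuming $k_j^+=0$; the plan is to run $V_j$ through separation of variables on the two half-strips and combine a one-sided coercivity estimate with the constraint forced by the exponential decay at $+\infty$.

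First one extracts the consequences of $k_j^+=0$ from the identities already at hand. By (\ref{56}) we get $\IM\beta_j=0$, and then (\ref{55}) together with (\ref{52}) yields $\IM k_j^-=0$ while $\RE k_j^-=(\RE\beta_j)^2/(8(1-\lambda_0))\geqslant0$, so that $k_j^-$ is a nonnegative real number. Next, separating variables on $\Pi^+:=\Pi\cap\{x_1>0\}$ --- where the transverse conditions are Neumann at $x_2=0$ and Dirichlet at $x_2=\pi$ --- the Fourier coefficients $b_m(x_1):=\frac2\pi\int_0^\pi V_j(x_1,x_2)\cos\big((m-\tfrac12)x_2\big)\di x_2$ solve $b_m''+(\lambda_0-(m-\tfrac12)^2)b_m=0$, and matching with (\ref{36}) identifies $b_1(x_1)=k_j^+\E^{\iu\sqrt{\lambda_0-1/4}\,x_1}$ and $b_m(x_1)=c_m\E^{-\sqrt{(m-1/2)^2-\lambda_0}\,x_1}$, $m\geqslant2$, with some constants $c_m$. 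Since $k_j^+=0$, the first mode $b_1$ vanishes identically; hence $V_j|_{\Pi^+}$ decays exponentially and, in particular, $\int_0^\pi V_j(0,x_2)\cos\frac{x_2}{2}\di x_2=0$. The same device on $\Pi^-:=\Pi\cap\{x_1<0\}$ (Dirichlet at $x_2=0,\pi$, expansion in $\sin nx_2$, as in Lemma~\ref{lm0}) gives
\begin{equation*}
V_j(x)=\big(k_j^-\E^{\sqrt{1-\lambda_0}\,x_1}+\E^{-\sqrt{1-\lambda_0}\,x_1}\big)\sin x_2+\sum_{n\geqslant2}d_n\E^{\sqrt{n^2-\lambda_0}\,x_1}\sin nx_2\,,\qquad x_1<0\,,
\end{equation*}
with some $d_n\in\mathbb{C}$, the absence of the modes $\E^{-\sqrt{n^2-\lambda_0}\,x_1}$, $n\geqslant2$, being forced by the prescribed asymptotics.

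Then I would compute $\mathcal E:=\int_0^\pi\overline{V_j(0,x_2)}\,\frac{\p V_j}{\p x_1}(0,x_2)\,\di x_2$ from both sides of the line $x_1=0$. Multiplying (\ref{49}) by $\overline{V_j}$ and integrating by parts over $\Pi^+$ (removing a small semicircle around the origin, whose contribution vanishes in the limit exactly as in the derivation of (\ref{51}), and using the decay at $+\infty$) one obtains $\mathcal E=\lambda_0\|V_j\|_{L_2(\Pi^+)}^2-\|\nabla V_j\|_{L_2(\Pi^+)}^2$; since $b_1\equiv0$, every transverse mode present in $V_j|_{\Pi^+}$ has transverse eigenvalue at least $\frac94$, so $\|\partial_{x_2}V_j\|_{L_2(\Pi^+)}^2\geqslant\frac94\|V_j\|_{L_2(\Pi^+)}^2$ and therefore $\mathcal E\leqslant0$. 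On the other hand, the $\Pi^-$ expansion and the reality of $k_j^-$ give, by orthogonality of $\{\sin nx_2\}$,
\begin{equation*}
\mathcal E=\frac\pi2\Big(\sqrt{1-\lambda_0}\,\big((k_j^-)^2-1\big)+\sum_{n\geqslant2}\sqrt{n^2-\lambda_0}\,|d_n|^2\Big)\,,
\end{equation*}
whence $\sum_{n\geqslant2}\sqrt{n^2-\lambda_0}\,|d_n|^2\leqslant\sqrt{1-\lambda_0}\,\big(1-(k_j^-)^2\big)$; in particular $0\leqslant k_j^-\leqslant1$.

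It remains to feed in the relation $\int_0^\pi V_j(0,x_2)\cos\frac{x_2}{2}\di x_2=0$. Expanding $V_j(0,\cdot)$ in the basis $\{\sin nx_2\}$ and noting $q_n:=\int_0^\pi\sin nx_2\,\cos\frac{x_2}{2}\di x_2=\frac{4n}{4n^2-1}>0$, this reads $\sum_{n\geqslant2}d_nq_n=-(k_j^-+1)q_1=-\frac43(k_j^-+1)$. Cauchy--Schwarz, the bound on $\sum\sqrt{n^2-\lambda_0}|d_n|^2$, the estimate $\sqrt{n^2-\lambda_0}\geqslant\sqrt{4-\lambda_0}$ for $n\geqslant2$, the inequality $\sqrt{1-\lambda_0}<\sqrt{4-\lambda_0}$, and the Parseval identity $\sum_{n\geqslant1}q_n^2=\tfrac{\pi^2}{4}$ (so that $\sum_{n\geqslant2}q_n^2=\tfrac{\pi^2}{4}-\tfrac{16}{9}$) then yield, after dividing by $k_j^-+1>0$,
\begin{equation*}
\frac{16}{9}(1+k_j^-)\leqslant\Big(\frac{\pi^2}{4}-\frac{16}{9}\Big)(1-k_j^-)\leqslant\frac{\pi^2}{4}-\frac{16}{9}<\frac{16}{9}\,,
\end{equation*}
so that $k_j^-<0$, contradicting $k_j^-\geqslant0$. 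Hence $k_j^+\neq0$. The step demanding the most care is the two-sided evaluation of $\mathcal E$: one has to justify the modal expansions (convergence in $W_2^1$ on truncated half-strips, as in Lemma~\ref{lm0}), the vanishing of the corner contribution at the origin, and the exclusion of the growing modes at $-\infty$ --- all of which rest on the regularity of the generalized solution and on the asymptotics (\ref{36}).
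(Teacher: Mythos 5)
Your argument is correct in substance, but it takes a genuinely different route from the paper. The paper also argues by contradiction, but it proceeds structurally: assuming $k_j^+=0$ it subtracts the incident mode from $V_j$, tests against the propagating mode $\E^{-\iu\sqrt{\lambda_0-1/4}\,x_1}\cos\frac{x_2}{2}$ to get the identity (\ref{20}), then builds the auxiliary functions $\widetilde V_j$ and $Z_j$ (an integral transform of $\widetilde V_j$), identifies $Z_j$ via the uniqueness supplied by Lemma~\ref{lm6}, solves the resulting first-order ODE and concludes that $V_j$ would have to equal $\chi\,\E^{-\sqrt{1-\lambda_0}x_1}\sin x_2$, which fails to lie in $\Hloc^1(\Pi)$. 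You instead make the argument quantitative: from (\ref{51}), (\ref{55}), (\ref{56}) with $k_j^+=0$ you get $k_j^-$ real with $0\leqslant k_j^-\leqslant 1$ (the energy inequality on the right half-strip), the vanishing of the first left-mode on $x_1>0$ forces $\int_0^\pi V_j(0,x_2)\cos\frac{x_2}{2}\di x_2=0$, and Cauchy--Schwarz with the explicit overlaps $q_n=\frac{4n}{4n^2-1}$ and Parseval $\sum_n q_n^2=\frac{\pi^2}{4}$ yields the numerical contradiction $\frac{16}{9}(1+k_j^-)\leqslant\frac{\pi^2}{4}-\frac{16}{9}<\frac{16}{9}$. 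Your route avoids the auxiliary transforms and Lemma~\ref{lm6} altogether and is more elementary, at the price of depending on a fortunate numerical fact (the $\sin x_2$ mode carries more than half of the Parseval mass of $\cos\frac{x_2}{2}$) and of the trace/series manipulations at $x_1=0$ that you yourself flag: because of the corner behavior (\ref{38}) the trace $\frac{\p V_j}{\p x_1}(0,\cdot)$ behaves like $x_2^{-1/2}$ and is not square integrable, so the termwise evaluation of $\mathcal E$ should be performed at $x_1=s<0$ and passed to the limit $s\to 0^-$ (monotone convergence for the $n\geqslant2$ tail, dominated convergence for $\mathcal E(s)$, the product $\overline{V_j}\,\p_{x_1}V_j$ being bounded near the corner); likewise the exclusion of the growing modes $\E^{-\sqrt{n^2-\lambda_0}x_1}$, $n\geqslant2$, at $-\infty$ uses the intended (decaying-remainder) reading of (\ref{36}). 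With these routine justifications spelled out, your proof stands on its own.
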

\begin{proof}
We will again proceed by contradiction. We suppose that $k_{j}^{+}$ vanishes and consider the function $\chi$ satisfying the equations
\begin{equation*}
\chi(x_{1})=1 \quad \text{for} \;\; x_{1}<0\,, \qquad
\chi(x_{1})=0 \quad \text{for} \quad x_{1}>0\,.
\end{equation*}
In view of the standard embedding theorems the function
\begin{equation}\label{33}
\widehat{V}_{j}(x)=V_{j}(x)-\chi(x_1)\E^{-\sqrt{1-\lambda_{0}}\,x_1}\sin x_2
\end{equation}
belongs to the spaces $\H^2(\Pi(-\infty,0))$ and $\H^2(\Pi(0,R))$ for all $R>0$, and at the point $x_1=0$ these functions and their first derivatives with respect to $x_1$ have a first-kind discontinuity. It is not difficult to check that $\widehat{V}_{j}(x)$ solve the equation (\ref{49}) in the region $\Pi(-\infty,0)\cup\Pi(0,+\infty)$ and exhibit the asymptotic behavior
\begin{equation}\label{4.12a}
\begin{aligned}
&V_j(x)=k^+_j\,\mathrm{e}^{\iu
\sqrt{\lambda_0-\frac{1}{4}}\,x_1}\cos
\frac{x_2}{2}+\Odr\big(\mathrm{e}^{- \sqrt{\frac{9}{4}-\lambda_0}\,x_1} \big)\,,
  &&   x_1\to+\infty\,,
\\[.2em]
&
V_j(x)=k^-_j\,\mathrm{e}^{\sqrt{1-\lambda_0}\,x_1}\sin x_2
+\Odr\big(\mathrm{e}^{- \sqrt{4-\lambda_0}\,x_1} \big)\,,
&&  x_1\to-\infty\,,
\end{aligned}
\end{equation}
where $k_j^+=0$.

Now we multiply the equation in (\ref{49}) by $\mathrm{e}^{-\iu\sqrt{\lambda_{0}-\frac{1}{4}}\,x_1}\cos\frac{x_{2}}{2}$, integrate twice by parts over $\Pi(-R,R)$ and take into account the boundary condition imposed on $\widehat{V}_{j}$ and the asymptotic behavior of these functions. Passing to the limit $R\to\infty$ we get
\begin{equation}\label{20}
\int\limits_{-\infty}^{0}\mathrm{e}^{-\iu\sqrt{\lambda_{0}-\frac{1}{4}}\,x_{1}}
\frac{\p \widehat{V}_{j}}{\p
x_{2}}\Big|_{x_{2}=0}\,\di x_{1}=\frac{4}{3}\left(\sqrt{1-\lambda_{0}}-
\iu\sqrt{\lambda_{0}-\frac{1}{4}}\right).
\end{equation}
It is easy to check that
\begin{equation}\label{35}
\int\limits_{-\infty}^{0}\mathrm{e}^{\left(-\iu\sqrt{\lambda_{0}-\frac{1}{4}}
+\sqrt{1-\lambda_{0}}\right)\,x_{1}}\,\di x_{1}
=\frac{4}{3}\Big(\sqrt{1-\lambda_{0}}+\iu\sqrt{\lambda_{0}-\frac{1}{4}}\Big).
\end{equation}
Next we define the function
\begin{equation*}
\widetilde{V}_{j}(x)=\widehat{V}_{j}(x)-C_{1}\chi(x_1)\, \mathrm{e}^{\sqrt{1-\lambda_{0}}\,x_{1}}\sin x_{2}\,, \quad
C_1:=\frac{\sqrt{1-\lambda_{0}}-
\iu\sqrt{\lambda_{0}-\frac{1}{4}}}
{\sqrt{1-\lambda_{0}}+\iu\sqrt{\lambda_{0}-\frac{1}{4}}}\,,
\end{equation*}
which also belongs to the spaces $\H^2(\Pi(-\infty,0))$ and $\H^2(\Pi(0,R))$ for any \mbox{$R>0$}. In view of (\ref{20}) the function $\widetilde{V}_{j}$ solves the equation from (\ref{49}) in the region $\Pi(-\infty,0)\cup\Pi(0,+\infty)$ and satisfies the boundary conditions from (\ref{49}). It follows from (\ref{20}) and (\ref{35}) that
\begin{equation}\label{4.14a}
\int\limits_{-\infty}^{0}\mathrm{e}^{-\iu\sqrt{\lambda_{0}-\frac{1}{4}}\,x_{1}}
\frac{\p \widetilde{V}_{j}}{\p
x_{2}}\Big|_{x_{2}=0}\,\di x_{1}=0\,.
\end{equation}
It is not difficult to check that
\begin{equation*}
\widetilde{V}_{j}(0,x_2)=
\frac{2 \sqrt{1-\lambda_{0}}}
{\sqrt{1-\lambda_{0}}-\iu\sqrt{\lambda_{0}-\frac{1}{4}}}\sin
x_{2}\,,\quad
\frac{\p \widetilde{V}_{j}}{\p x_{1}}(0,x_2)=
\frac{2\iu\sqrt{1-\lambda_{0}}\sqrt{\lambda_{0}-\frac{1}{4}}}
{\sqrt{1-\lambda_{0}}-\iu\sqrt{\lambda_{0}-\frac{1}{4}}}\sin
x_{2}\,.
\end{equation*}
Consider now the function
\begin{equation}\label{30}
Z_{j}(x):=\mathrm{e}^{\iu\sqrt{\lambda_{0}-\frac{1}{4}}\,x_{1}}
\int\limits_{-\infty}^{x_1}\mathrm{e}^{-\iu\sqrt{\lambda_{0}-\frac{1}{4}}\,t}\,
\widetilde{V}_{j}(t,x_2)\,\di t\,,
\end{equation}
which belongs to $\H^2(\Pi(-\infty,0))$, $\H^2(\Pi(0,R))$ and $\H^1(\Pi(-R,R))$ for all $R>0$. In view of (\ref{4.14a}) and boundary conditions satisfied by $\widetilde{V}_j$ it is not difficult to check that the function $Z_j$ satisfies the boundary condition from (\ref{49}); by a direct computation one checks that it satisfies also the equations
\begin{align*}
&(-\Delta-\lambda_{0})Z_{j}=0\quad\text{in}\quad\Pi(-\infty,0)\,,
\\[.3em]
&(-\Delta-\lambda_{0})Z_{j}=\mathrm{e}^{-\iu\sqrt{\lambda_{0}-\frac{1}{4}}\,x_{1}}
\Bigg(\bigg[\frac{\p \widetilde{V}_{j}}{\p
x_{1}}\bigg]_{x_{1}=0}
-\iu\sqrt{\lambda_{0}-\frac{1}{4}}\big[\widetilde{V}_{j}\big]_{x_{1}=0}\Bigg)=0
\quad\text{in}\quad\Pi(0,+\infty)\,.
\end{align*}
Furthermore, the function $Z_j$ exhibits the asymptotic behavior (\ref{4.12a}) with $k_j^\pm$ replaced by other complex constants. We notice also that
\begin{equation}\label{4.14b}
Z_j(0,x_2)=0\,,
\quad\frac{\p Z_{j}}{\p
x_{1}}(0,x_2)=\widetilde{V}_{j}(0,x_2) =\frac{2\sqrt{1-\lambda_{0}}}
{\sqrt{1-\lambda_{0}}+\iu\sqrt{\lambda_{0}-\frac{1}{4}}}\sin
x_{2}.
\end{equation}
In this way, the function $Z_j$ solves the problem with the jump (\ref{4.14b}). Such a solution is unique as it is easy to deduce from Lemma~\ref{lm6}. On the other hand, the solution can be also expressed in terms of the original function $V_j$:
\begin{equation*}
Z_j(x)=\frac{1}{\sqrt{1-\lambda_{0}}-\iu\sqrt{\lambda_{0} -\frac{1}{4}}} \left(V_j(x)-\chi(x_1)
\Big(\mathrm{e}^{-\sqrt{1-\lambda_{0}}\,x_{1}}
-\mathrm{e}^{\sqrt{1-\lambda_{0}}\,x_{1}}\Big)\sin x_2\right).
\end{equation*}
From here and definition (\ref{30}) one can derive the differential equation
\begin{equation*}
\frac{\p \phi_{j}}{\p x_{1}}+C_2\phi_{j}=
C_3\chi\mathrm{e}^{\big(\sqrt{1-\lambda_{0}}-\iu\sqrt{\lambda_{0}-\frac{1}{4}}\big)x_{1}}\sin
x_{2},
\end{equation*}
where
\begin{equation*}
\begin{gathered}
\phi_{j}=\mathrm{e}^{-\iu\sqrt{\lambda_{0}-\frac{1}{4}}\,x_{1}}\widetilde{V}_{j}\,,
\quad
C_2=\sqrt{1-\lambda_{0}}-\iu\sqrt{\lambda_{0}-\frac{1}{4}}\,,
\\[.3em]
C_3=\frac{2\sqrt{1-\lambda_{0}}\big(\sqrt{1-\lambda_{0}}
-\iu\sqrt{\lambda_{0}-\frac{1}{4}}\big)}
{\sqrt{1-\lambda_{0}}+\iu\sqrt{\lambda_{0}-\frac{1}{4}}}\,;
\end{gathered}
\end{equation*}
solving it we arrive at
\begin{equation*}
V_{j}(x)=\mathrm{e}^{-\sqrt{1-\lambda_{0}}\,x_{1}}\Big(C_4(x_{2})+\chi\sin
x_{2}\Big).
\end{equation*}
The function $V_{j}$ satisfies the boundary-value problem (\ref{49}) and the asymptotic requirement (\ref{36}), hence we necessarily have
\begin{equation*}
V_{j}=\mathrm{e}^{-\sqrt{1-\lambda_{0}}x_{1}}\chi\sin x_{2}\,,\quad \pm x_1>0\,,
\end{equation*}
however, such a function does not belong to $\Hloc^1(\Pi)$, and therefore it cannot represent a generalized solution to the problem (\ref{49}), (\ref{36}).
\end{proof}

It follows from the last lemma and identity (\ref{55}) that the imaginary part of the constant $k_{j}^{-}$ does not vanish. This concludes the proof of Theorem~\ref{th1}.

\subsection*{Acknowledgements}

The research was supported by Russian Foundation for Basic Research and by Czech Science Foundation under the contract P203/11/0701.


\end{document}